%
\documentclass[10pt]{article} 
\usepackage{latexsym, amsmath}
\usepackage{lscape}
\usepackage{epstopdf,caption,subcaption,graphicx,xcolor}
\usepackage{tikz}
\usepackage{pgfgantt}
\usepackage{hyperref}
\usetikzlibrary{shapes,arrows,fit,calc,positioning}
\usetikzlibrary{decorations.pathreplacing}
 \setlength{\evensidemargin}{+0.00in}   
 \setlength{\oddsidemargin} {+0.00in}
 \setlength{\textwidth}     {+6.50in}
 \setlength{\topmargin}     {-0.50in}   
 \setlength{\textheight}    {+9.00in}
\parskip=3pt
\normalsize
\bibliographystyle{plainurl}
%
%
%
%
\newtheorem{theorem}{Theorem}
\newtheorem{lemma}{Lemma}

\newtheorem{definition}{Definition}

\newenvironment{proof}{{\sc Proof. }}{\hfill$\Box$\vspace{0.1in}}
%
%

%
\def\mcH{\mathcal{H}}
\def\mcK{\mathcal{K}}
\def\mcP{\mathcal{P}}
\def\mcQ{\mathcal{Q}}
\pagestyle{myheadings}
\thispagestyle{plain}
%
\markboth
{\sc Hu et al. /v:\today}
{\sc Approximation algorithms for star packing (v: \today)}
\title{Approximation algorithms for non-sequential star packing problems}
\author{
Mengyuan~Hu\thanks{Department of Mathematics, Hangzhou Dianzi University. Hangzhou 310018, China.
	Email: \texttt{\{2341070105,anzhang, chenyong\}@hdu.edu.cn}} 
\and
An~Zhang$^*$$^\ddagger$
\and
Yong~Chen$^*$
\and
Mingyang~Gong\thanks{Department of Computing Science, University of Alberta.  Edmonton, Alberta T6G 2E8, Canada.
	Email: \texttt{\{mgong4, guohui\}@ualberta.ca}}
\and
Guohui~Lin$^\dagger$\thanks{Corresponding authors.}
}
\date{}
\begin{document}
\maketitle
\begin{abstract}
For a positive integer $k \ge 1$, a $k$-star ($k^+$-star, $k^-$-star, respectively) is a connected graph containing a degree-$\ell$ vertex and $\ell$ degree-$1$ vertices,
where $\ell = k$ ($\ell \ge k$, $1 \le \ell \le k$, respectively).
The $k^+$-star packing problem is to cover as many vertices of an input graph $G$ as possible using vertex-disjoint $k^+$-stars in $G$;
and given $k > t \ge 1$, the $k^-/t$-star packing problem is to cover as many vertices of $G$ as possible using vertex-disjoint $k^-$-stars but no $t$-stars in $G$.
Both problems are NP-hard for any fixed $k \ge 2$.
We present a $(1 + \frac {k^2}{2k+1})$- and a $\frac 32$-approximation algorithms for the $k^+$-star packing problem when $k \ge 3$ and $k = 2$, respectively,
and a $(1 + \frac 1{t + 1 + 1/k})$-approximation algorithm for the $k^-/t$-star packing problem when $k > t \ge 2$.
They are all local search algorithms and they improve the best known approximation algorithms for the problems, respectively.

\paragraph{Keywords:}
Star packing; local search; approximation algorithm. 
\end{abstract}

\section*{Acknowledgments}
This research is supported by the NSERC Canada,
the National Natural Science Foundation of China (Grants 12371316, 12471301),
the Zhejiang Provincial Natural Science Foundation (Grant LY21A010014),
and the Ministry of Science and Technology of China (Grants G2022040004L, G2023016016L).

\section{Introduction}\label{sec:intro}
For a positive integer $k \ge 1$, a $k$-star ($k^+$-star, $k^-$-star, respectively) is a connected graph containing a degree-$\ell$ vertex, called the {\em center},
and $\ell$ degree-$1$ vertices, called {\em satellites}, where $\ell = k$ ($\ell \ge k$, $1 \le \ell \le k$, respectively).

We study the problem to cover the maximum number of vertices in an input graph using vertex-disjoint stars, referred to as {\em star packing}.
Such a problem, and many variants, have a rich literature due to their theoretical importance and numerous networking applications such as wireless sensor network~\cite{LCC17}.
Specially, we study two {\em non-sequential} star packing problems from the approximation algorithm perspective.
For one of them, the candidate stars each has at least $k$ satellites for a given constant $k \ge 2$, called the {\em $k^+$-star packing} problem,
and we propose a $(1 + \frac {k^2}{2k+1})$- and a $\frac 32$-approximation algorithms when $k \ge 3$ and $k = 2$, respectively;
for the other, the candidate stars each has at most $k$ but not exactly $t$ satellites where $k > t \ge 2$, called the {\em $k^-/t$-star packing} problem,
and we propose a $(1 + \frac 1{t + 1 + 1/k})$-approximation algorithm.
All of them improve the previous best results, respectively.

The studied problems fall in the general $\mcH$-packing problem.
Given a collection $\mcH$ of (non-isomorphic) graphs,
an {\em $\mcH$-packing} of an input graph $G = (V, E)$ is a set of {\em vertex-disjoint} subgraphs of $G$ in which each subgraph is isomorphic to a graph of $\mcH$.
A vertex of $G$ is {\em covered} by the $\mcH$-packing if it appears in a subgraph inside the packing.
The goal of the $\mcH$-packing problem is to find an $\mcH$-packing that covers the maximum number of vertices.
In particular, a {\em perfect} $\mcH$-packing or a so-called {\em $\mcH$-factor} refers to an $\mcH$-packing that covers all the vertices in the input graph.
The existence of an $\mcH$-factor depends on $\mcH$ and the input graph;
$\mcH$ and sometimes the class of the input graphs determine the complexity and approximability of the optimization $\mcH$-packing problem.

We first review some closely related $\mcH$-packing problems, and then the star packing problems for which $\mcH$ consists of only stars.

For a positive integer $i$, $K_i$ ($C_i$, $P_i$, respectively) denotes the complete graph (the single cycle, the single path, respectively) on $i$ vertices.
One sees that when $\mcH = \{K_2\}$, the $\mcH$-packing problem is exactly the classical {\em maximum matching} problem,
which is solvable in polynomial time~\cite{GK04}. 
When $\mcH = \{K_3\}$, the $\mcH$-packing problem is the classical {\em triangle packing} problem,
which is polynomially solvable on subcubic graphs (i.e., graphs of maximum degree at most $3$)~\cite{CR02} but becomes APX-hard on graphs of maximum degree $4$ or larger~\cite{Kan91}.
When $\mcH = \{P_3\}$, the $\mcH$-packing problem or the {\em maximum $P_3$-packing} problem is NP-hard on subcubic graphs~\cite{MT07},
and NP-hard on claw-free cubic planar graphs~\cite{XL21}.
More generally, Kirkpatrick and Hell~\cite{KH83} showed that, when $\mcH$ contains a single graph $H$,
the $\mcH$-packing problem is NP-hard as long as $H$ is connected and contains at least three vertices;
and this NP-hardness result holds even when restricted to planar graphs~\cite{BJL90}.

When $\mcH = \{K_2, H\}$ for some $H$ non-isomorphic to $K_2$,
Loebl and Poljak~\cite{LP93} proved that the $\mcH$-packing problem is NP-hard unless $H$ is
perfect matchable (i.e., $H$ admits a perfect matching),
or hypomatchable (i.e., the remainder graph $H - v$ of $H$ after removing the vertex $v$ admits a perfect matching, for any vertex $v$ in $H$),
or a propeller (i.e., $H$ can be obtained from a hypomatchable graph by adding a new pair of vertices $u$ and $v$, a new edge $(u, v)$,
	and some new edges connecting $u$ to at least one vertex in the hypomatchable graph).
When $\mcH$ consists of $K_2$ and some other non-isomorphic graphs $H_1, H_2, \ldots, H_\ell$,
then the $\mcH$-packing problem is polynomially solvable if every $H_i$ is hypomatchable~\cite{HK84}.

When $\mcH$ consists of only complete graphs, the $\mcH$-packing problem is NP-hard unless $K_2\in \mcH$~\cite{HK84}.
When $\mcH$ contains only cycles, Hell et al.~\cite{HKK88} showed that the $\mcH$-packing problem is NP-hard unless
$\mcH$ contains all cycles, or all cycles but $C_3$, or all cycles but $C_4$.
When $\mcH$ is composed of all paths of order $k$ or above, the $\mcH$-packing problem or the $k^+$-path packing problem is polynomially solvable for $k \le 3$~\cite{CCL22}
while becomes NP-hard for $k \ge 4$~\cite{KLM23}.

Closely related to the $\mcH$-packing problem,
in the {\em weighted set packing} problem one is given a ground set and a collection of subsets of the ground set each associated with a non-negative weight,
and the goal is to find a sub-collection of pairwise disjoint sets of the maximum total weight.
The weighted set packing problem is a central optimization problem~\cite{GJ79} that has received numerous studies from the approximation algorithm perspective,
and the state-of-the-art includes a $(\frac k2 + \epsilon)$-approximation algorithm~\cite{Neu23}, where $k$ is the maximum cardinality of the given sets and $\epsilon > 0$,
a $1.786$-approximation algorithm when $k = 3$~\cite{TW23},
and a $(\frac {k+1}3 + \epsilon)$-approximation algorithm when the sets are uni-weighted~\cite{CGM13,FY14},
all of which are based on local improvement.

Since the graph isomorphism problem is still not polynomially solvable~\cite{HBD17},
one cannot afford to reduce the $\mcH$-packing problem to the weighted set packing problem if the graphs in $\mcH$ are arbitrary,
but otherwise such as the triangle packing and the $P_3$-packing problems can be reduced to the uni-weighted $3$-set packing problem
to obtain their first $(\frac 43 + \epsilon)$-approximation algorithms.
Nevertheless, typically directly designing local operations for the $\mcH$-packing problem leads to better performance analysis.
To name a few, the triangle packing problem on graphs of maximum degree $4$ admits a $\frac 65$-approximation algorithm~\cite{MW08},
and the $P_3$-packing problem on cubic graphs admits a $\frac 43$-approximation algorithm~\cite{KM04}.
For the $k^+$-path packing problem, Gong et al.~\cite{GFL22} proposed a local improvement algorithm which achieves a ratio of $\rho(k) \leq 0.4394k + 0.6576$ for any fixed $k \ge 4$;
when $k = 4$, Kobayashi et al.~\cite{KLM23} presented a simple local improvement algorithm which turns out to be a $4$-approximation,
Gong et al.~\cite{GFL22} proposed a $2$-approximation algorithm using more local operations;
and most recently Gong et al.~\cite{GCL23} proposed to begin with a maximum matching and then apply multiple local improvement operations, resulting in a $1.874$-approximation algorithm.

In this paper, we study the star packing problem, i.e., the $\mcH$-packing problem where $\mcH$ is a collection of various stars.
In the literature, an $i$-star is also written as $K_{1, i}$, that is, the complete bipartite graph with a singleton on one side and $i$ vertices on the other side.
In particular, the $k^+$-star packing problem is the $\mcH$-packing problem where $\mcH = \{K_{1, k}, K_{1, k+1}, K_{1, k+2}, \ldots\}$, for a given $k \ge 1$,
and the $k^-/t$-star packing problem is the one where $\mcH = \{K_{1, 1}, \ldots, K_{1, t-1}, K_{1, t+1}, \ldots, K_{1, k}\}$, for given $k > t \ge 1$.
Hell and Kirkpatrick~\cite{HK86} showed that the star packing problem is polynomially solvable if
$\mcH$ is composed of {\em sequential} stars, including the cases where $\mcH$ contains all stars, or $\mcH$ contains all the stars with up to $k$ satellites for any $k \ge 1$;
otherwise, the same authors showed that the problem is strongly NP-hard~\cite{HK84}.
Furthermore, the $k^+$-star packing problem for any $k \ge 2$ is NP-hard even on bipartite graphs~\cite{LL21},
and the $2^+$-star packing problem is NP-hard even on bipartite graphs with maximum degree $4$~\cite{LL21} or on cubic graphs~\cite{XLL24}.
On the positive side, the $2^+$-star packing problem admits a $2$-approximation algorithm~\cite{LL21};
and on cubic graphs it admits a better $\frac 76$-approximation algorithm~\cite{XLL24}.
The $2$-approximation is recently improved to a $\frac 95$-approximation algorithm by Huang et al.~\cite{HZG24},
who also presented a local search $(1 + \frac k2)$-approximation algorithm for the $k^+$-star packing problem for any $k \ge 3$.
For the $k^-/t$-star packing problem with $k > t \ge 2$, Li and Lin~\cite{LL21} observed a simple $(1 + \frac 1t)$-approximation algorithm.
There is not much result on inapproximability, except the APX-completeness of the $3$-star packing on cubic graphs,
which is equivalent to the maximum distance-$3$ independent set problem~\cite{EIL17,XL24}.

We continue to study the $k^+$-star packing problem with $k \ge 2$ and the $k^-/t$-star packing problem with $k > t \ge 2$ from the perspective of approximation algorithms.
Our algorithms are all based on local search or local improvement.
For the first problem, we design operations to increase the number of vertices covered by the current feasible $k^+$-star packing;
for the second problem, the operations are designed to decrease the number of $t$-stars in the current optimal sequential $k^-$-star packing,
or to keep the number of $t$-stars but increase the number of other stars.
These algorithms are presented in Sections~2 and~3, respectively.
Table~\ref{tab01} below summarizes the previous approximation results and the improved ones achieved in this paper.
We conclude the paper in Section 4.
%
\begin{table}[h]
\centering
\captionsetup{width=.95\textwidth}
\caption{Previous approximation results and the improved ones achieved for the two non-sequential star packing problems.\label{tab01}}
\begin{tabular}{l||l}
  Problem 					& Approximation ratios \\
  \hline
  \hline
  $k^+$-star packing with $k \ge 3$				& $1 + \frac k2$~\cite{HZG24} $\longrightarrow$ $1 + \frac {k^2}{2k+1}$ (Thm~\ref{thm01})\\
  \hline
  $2^+$-star packing 									& $2$~\cite{LL21} $\longrightarrow$ $\frac 95$~\cite{HZG24} $\longrightarrow$ $\frac 32$ (Thm~\ref{thm02})\\
  \hline
  $k^-/t$-star packing with $k > t \ge 2$	& $1 + \frac 1t$~\cite{LL21} $\longrightarrow$ $1 + \frac 1{t + 1 + 1/k}$ (Thm~\ref{thm03})\\
  \hline
  $k^-/t$-star packing with $k = \infty$ and $t \ge 2$	& $1 + \frac 1{t + 2}$ (Extended from Thm~\ref{thm03})
\end{tabular}
\end{table}

\section{Approximating the $k^+$-star packing problem}
The previous best results for the $k^+$-star packing problem when $k \ge 3$ is the $(1 + \frac k2)$-approximation algorithm,
and when $k = 2$ is the $\frac 95$-approximation algorithm, both by Huang et al.~\cite{HZG24} (see Table~\ref{tab01}).
While the former is a relatively simple local search algorithm,
the latter $\frac 95$-approximation algorithm is involved and,
it first computes a maximal collection of trees, 
then connects the other vertices to these trees,
and lastly decomposes the resulting trees carefully into a solution. 

We first present our algorithm for any $k \ge 2$, which is a local search algorithm denoted as {\sc LocalSearch-$k^+$}, in the next subsection,
and then its performance analysis in Subsection 2.2.
When $k = 2$, the general algorithm {\sc LocalSearch-$k^+$} works out to be a $\frac 95$-approximation;
we then add one more operation to the general algorithm to become the final {\sc LocalSearch-$2^+$}, and show that it is a $\frac 32$-approximation.

We fix an input graph $G = (V, E)$ for presentation;
a feasible solution $\mcP$ is a collection of vertex-disjoint $k^+$-stars in $G$, and a star in $\mcP$ is called an {\em internal} star.
We use $V(\mcP)$ to denote the set of the vertices in the internal stars, i.e., the covered vertices by $\mcP$.
A vertex in $V \setminus V(\mcP)$ is {\em uncovered},
and the subgraph $R$ induced by all the uncovered vertices is the {\em remainder} graph with respect to $\mcP$.
A star in $R$ is called an {\em outside} star (with respect to $\mcP$ --- which is not repeated in the sequel).

\subsection{The algorithm for $k \ge 2$}
Similar to many other local search algorithms such as in Huang et al.~\cite{HZG24},
we define several operations to be repeatedly executed inside our algorithm {\sc LocalSearch-$k^+$}.
Note that the algorithm is iterative, and during each iteration a feasible solution is assumed at the beginning, referred to as the {\em current} solution or packing,
an operation is applied and the solution is {\em updated}, and then the iteration ends.
When none of the operations is applicable, the algorithm terminates and returns the current solution as the final solution.

\begin{definition}
\label{def01}
{\em (Operation {\sc Collect($v$)})}
Given a vertex $v$ in the current remainder graph,
the operation {\sc Collect($v$)} extracts an $\ell$-star centered at $v$ from the remainder graph, where $\ell \ge k$ is the degree of $v$ in the remainder graph,
and adds it to the current solution $\mcP$.
\end{definition}

The algorithm starts with the empty solution,
and in the first phase (or the first a few iterations), {\sc Collect($v$)} is repeatedly applied on a vertex $v$ of degree at least $k$ in the remainder graph until impossible.
Note that at this moment, the maximum degree of the remainder graph is at most $k-1$ and no uncovered vertex is adjacent to the center of any internal star.
This property is stated in the following Lemma~\ref{lemma01}.
In fact, at the end of each iteration during the algorithm,
{\sc Collect($v$)} is {\em re-applied} to the center $v$ of every internal $k^+$-star to pick up the uncovered vertices adjacent to $v$, if any.

\begin{lemma}
\label{lemma01}
When operation {\sc Collect} is not applicable,
the maximum degree of the remainder graph with respect to the feasible solution $\mcP$ is at most $k-1$ and no uncovered vertex is adjacent to the center of any internal star.
\end{lemma}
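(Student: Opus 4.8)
The plan is to prove Lemma~\ref{lemma01} by directly unpacking what it means for operation {\sc Collect} to be inapplicable, and then arguing each of the two claimed properties separately. The statement makes two assertions: first, that the maximum degree of the remainder graph $R$ is at most $k-1$, and second, that no uncovered vertex is adjacent to the center of any internal star. I expect both to follow almost immediately from the definition of {\sc Collect($v$)} together with the re-application rule described just before the lemma, so the main work is in carefully tracking \emph{which} {\sc Collect} applications are ruled out by inapplicability.

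First I would address the degree bound. By Definition~\ref{def01}, the operation {\sc Collect($v$)} is applicable to a vertex $v$ precisely when $v$ lies in the current remainder graph and its degree there is $\ell \ge k$. Hence if {\sc Collect} is inapplicable, then no vertex of $R$ can have degree $k$ or more \emph{within} $R$; equivalently, every vertex of $R$ has degree at most $k-1$ in $R$. This is exactly the first claim, so this direction is a short contrapositive argument: were some uncovered vertex to have degree $\ge k$ in $R$, we could {\sc Collect} it, contradicting inapplicability.

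Next I would handle the adjacency claim. The key is the re-application convention stated in the paragraph preceding the lemma: at the end of every iteration, {\sc Collect($v$)} is re-applied to the center $v$ of every internal $k^+$-star to absorb any uncovered neighbors of $v$. So suppose, for contradiction, that some uncovered vertex $u$ is adjacent to the center $c$ of an internal star $S \in \mcP$. Since $c$ is a star center with $\ge k$ satellites and $u$ is an additional uncovered neighbor, re-applying {\sc Collect($c$)} would attach $u$ (and any other uncovered neighbors) to $S$, strictly enlarging the covered set and producing a valid $k^+$-star. This means a {\sc Collect}-type update is still available, contradicting the assumption that the algorithm has reached a state where {\sc Collect} is inapplicable. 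I would phrase this carefully so that re-application is understood as a special invocation of {\sc Collect} on an already-present center rather than a fresh extraction from $R$ alone.

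The step I expect to be the main obstacle — really the only subtle point — is making the re-application of {\sc Collect} to internal-star centers rigorous, since Definition~\ref{def01} as literally stated extracts a star from the remainder graph for a vertex $v$ \emph{in} the remainder graph, whereas here $c$ is already covered. I would resolve this by appealing explicitly to the stated convention that {\sc Collect($v$)} is re-applied to centers $v$ of internal stars to pick up adjacent uncovered vertices, and by noting that absorbing uncovered neighbors of $c$ keeps $S$ a $k^+$-star (its satellite count only increases) while strictly increasing $|V(\mcP)|$. Thus inapplicability of every {\sc Collect} action — including these re-applications — forces both the degree bound and the non-adjacency condition simultaneously, completing the proof.
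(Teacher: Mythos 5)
Your proof is correct and follows essentially the same reasoning as the paper, which treats the lemma as immediate from the definition of {\sc Collect} and the stated convention that {\sc Collect} is re-applied to the center of every internal star at the end of each iteration (the paper in fact gives no separate proof beyond that remark). Your explicit handling of the re-application convention as the mechanism behind the second claim is exactly the intended argument.
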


\begin{definition}
\label{def02}
{\em (Operation {\sc Pull-by-$(k+1)^+$})}
Given a satellite $v$ of an internal $(k+1)^+$-star, 
the operation removes $v$ from the internal $(k+1)^+$-star and applies {\sc Collect} to extract a $k^+$-star.
\end{definition}

Note that such an operation {\sc Pull-by-$(k+1)^+$} trades a satellite $v$ of an internal $(k+1)^+$-star for a $k^+$-star,
which covers $v$ since prior to this operation no {\sc Collect} is applicable.
There are two possible scenarios:
In one scenario $v$ is adjacent to at least $k$ uncovered vertices and thus $v$ is the center of the extracted star;
in the other scenario $v$ is adjacent to the center of an outside $(k-1)$-star and thus $v$ is a satellite of the extracted star.
We remark that to apply {\sc Pull-by-$(k+1)^+$}, only one satellite of an internal $(k+1)^+$-star is examined, disregarding how large the internal star is.
For example, even if there are two satellites of an internal $(k+2)^+$-star adjacent to the center of an outside $(k-2)$-star,
no {\sc Pull-by-$(k+1)^+$} is applied.

\begin{definition}
\label{def03}
{\em (Operation {\sc Pull-by-$k$})}
Given an internal $k$-star, 
the operation removes the star from $\mcP$ and applies {\sc Collect} to extract a $(k+1)^+$-star, or if not possible then to extract two vertex-disjoint $k$-stars.
\end{definition}

Note that such an operation {\sc Pull-by-$k$} trades an internal $k$-star for a larger $(k+1)^+$-star, or for two vertex-disjoint $k$-stars.
We remark that right after the operation, some vertices in the remainder graph might have their degree greater than or equal to $k$,
and thus {\sc Collect} operations might be applicable to extract more $k^+$-stars.
There are many possible scenarios to apply {\sc Pull-by-$k$}, for example,
when a satellite $v$ of the internal $k$-star is adjacent to $k$ or more uncovered vertices,
or when two satellites of the internal $k$-star are both adjacent to the center of an outside $(k-1)$-star,
or when two satellites of the internal $k$-star each is adjacent to the center of one of the two vertex-disjoint outside $(k-1)$-stars, respectively;
and so on.

After applying the {\sc Pull-by-$(k+1)^+$} and {\sc Pull-by-$k$} operations, we have the following properties stated in Lemma~\ref{lemma02} for the achieved solution $\mcP$.

\begin{lemma}
\label{lemma02} 
When {\sc Collect}, {\sc Pull-by-$(k+1)^+$} and {\sc Pull-by-$k$} are not applicable,
\begin{itemize}
\parskip=0pt
\item
	a satellite of any internal star is adjacent to at most $k-1$ uncovered vertices;
\item
	no satellite of any internal $(k+1)^+$-star is adjacent to the center of an outside $(k-1)$-star;
\item
	an internal $k$-star $S$ has at most one satellite that is adjacent to the center of an outside $(k-1)$-star $T$, and if this happens,
	then no other satellite of $S$ can be adjacent to $k-1$ uncovered vertices excluded from $T$ and
	no two other satellites of $S$ can both be adjacent to the center of an outside $(k-2)$-star vertex-disjoint from $T$.
\end{itemize}
\end{lemma}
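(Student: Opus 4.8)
The plan is to prove all three properties by contradiction, using a single template: assuming a stated property fails, I exhibit uncovered structure that makes one of {\sc Pull-by-$(k+1)^+$} or {\sc Pull-by-$k$} applicable while strictly increasing the number of covered vertices, contradicting the hypothesis that none of the operations applies. Throughout I rely on Lemma~\ref{lemma01}: since {\sc Collect} is not applicable, the remainder graph has maximum degree at most $k-1$, so every outside star has at most $k-1$ satellites, and no uncovered vertex touches the center of an internal star. The recurring mechanism is that dissolving an internal $k$-star frees its center and all $k$ satellites into the remainder graph, raising the degrees of nearby vertices enough to re-extract larger or more numerous stars; in particular a freed satellite is adjacent to the freed center, which will supply an extra satellite when one is needed.

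For the first property, suppose a satellite $v$ of an internal star has at least $k$ uncovered neighbors. If $v$ lies on an internal $(k+1)^+$-star, then removing $v$ leaves a valid $k^+$-star and {\sc Collect} centers a new $k$-star at $v$, so {\sc Pull-by-$(k+1)^+$} applies. If instead $v$ lies on an internal $k$-star $S$, then {\sc Pull-by-$k$} dissolves $S$; now $v$ is uncovered and adjacent to its $\ge k$ former uncovered neighbors together with the freed center of $S$, i.e.\ to at least $k+1$ uncovered vertices, so {\sc Collect} extracts a $(k+1)^+$-star at $v$. For the second property, suppose a satellite $v$ of an internal $(k+1)^+$-star is adjacent to the center $w$ of an outside $(k-1)$-star $T$; then {\sc Pull-by-$(k+1)^+$} removes $v$ (again leaving a $k^+$-star) and {\sc Collect} appends $v$ as a new satellite of $w$, turning $T$ into a $k$-star and newly covering the $k$ vertices of $T$ while losing none. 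In each case a short count confirms a strict increase.

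The third property carries the real case analysis and is where I expect the main difficulty. Fix an internal $k$-star $S$ with center $c$ and satellites $s_1,\dots,s_k$, and fix an outside $(k-1)$-star $T$ with center $w$. For the first clause, if two satellites of $S$ were adjacent to $w$, then {\sc Pull-by-$k$} would dissolve $S$, after which $w$ is adjacent to the $k-1$ satellites of $T$ together with those two freed satellites, i.e.\ to $k+1$ uncovered vertices, and {\sc Collect} extracts a $(k+1)^+$-star at $w$. So at most one satellite, say $s_1$, is adjacent to $w$; assume it is. For the remaining two clauses I again dissolve $S$ via {\sc Pull-by-$k$} and this time build two vertex-disjoint $k$-stars. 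One of them is always the $k$-star centered at $w$ with satellites the $k-1$ vertices of $T$ plus $s_1$. The second is built from the forbidden extra structure: if some other satellite $s_i$ has $k-1$ uncovered neighbors excluded from $T$, I center a $k$-star at $s_i$ using those $k-1$ neighbors plus the freed center $c$; if instead two other satellites $s_i,s_j$ share the center $x$ of an outside $(k-2)$-star $U$ vertex-disjoint from $T$, I center a $k$-star at $x$ using the $k-2$ satellites of $U$ plus $s_i$ and $s_j$.

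The crux, and the step I expect to be most delicate, is to certify in each subcase that the two re-extracted $k$-stars are genuinely vertex-disjoint and that coverage strictly increases. Disjointness is exactly what the hypotheses ``$k-1$ uncovered vertices excluded from $T$'' and ``outside $(k-2)$-star vertex-disjoint from $T$'' are for: these keep the second star away from $w$ and from the satellites of $T$, while the uncovered neighbors in question are distinct from all vertices of $S$ (which were covered at termination), and $s_1,s_i,s_j,c$ are distinct satellites/center of $S$. For the count, each construction covers the $k$ previously-uncovered vertices of $T$ and those of the second outside structure, plus the absorbed satellites of $S$, while the only vertices abandoned are the unused satellites of $S$, of which there are at most $k-1$; a one-line arithmetic check then yields a net gain of at least one covered vertex, so {\sc Pull-by-$k$} is applicable. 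The bookkeeping of these subcases---deciding when the freed center $c$ must serve as the extra satellite, and verifying that each extraction is precisely a $(k+1)^+$-star or a pair of vertex-disjoint $k$-stars as {\sc Pull-by-$k$} permits---is where the argument must be executed with care.
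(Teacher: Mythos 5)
Your proposal is correct and takes essentially the same route as the paper: the paper states Lemma~\ref{lemma02} without a formal proof, treating each clause as the contrapositive of one of the applicability scenarios it lists right after Definitions~\ref{def02} and~\ref{def03}, and your contradiction arguments (including using the freed center of the dissolved $k$-star as the extra satellite, and the disjointness checks in the third bullet) are exactly the intended justification spelled out. No gaps.
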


We next define two more operations that work on two internal stars to increase the covered vertices.
They can be deemed as combinations of the above two {\sc Pull} operations.
For convenience, a {\em $k_{v\mbox{-}u}$-star} is an internal $k$-star of which the satellite $v$ is adjacent to the center $u$ of an outside $(k-1)$-star,
and a {\em $k^+_v$-star} is an internal $k^+$-star of which the satellite $v$ is adjacent to exactly $k-1$ uncovered vertices
(see for an illustration in Figure~\ref{fig01}).

\begin{figure}[ht]
\centering\scalebox{1.0}{
  \setlength{\unitlength}{1bp}%
  \begin{picture}(128.95, 100.35)(0,0)
  \put(0,0){\includegraphics{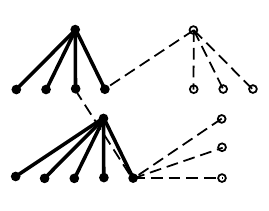}}
  \put(52.83,53.90){\fontsize{11.38}{13.66}\selectfont $v$}
  \put(94.96,85.77){\fontsize{11.38}{13.66}\selectfont $u$}
  \put(64.55,8.12){\fontsize{11.38}{13.66}\selectfont $x$}
  \put(37.60,85.79){\fontsize{11.38}{13.66}\selectfont $w$}
  \put(52.39,40.98){\fontsize{11.38}{13.66}\selectfont $y$}
  \end{picture}%
}
\captionsetup{width=.95\textwidth}
\caption{An illustration to apply operation {\sc Pull-by-$(k, (k+1)^+)$} on two internal $k$-star centered at $w$, which is a $k_{v\mbox{-}u}$-star,
	and $(k+1)^+$-star centered at $y$, which is a $(k+1)^+_x$-star.
	The filled vertices are covered, the empty vertices are uncovered, the edges in the internal stars are solid while the dashed edges are in the input graph $G$.
	In this case, $k = 4$, a vertex of the $k_{v\mbox{-}u}$-star other than $v$ is adjacent to $x$.
	After removing the $k_{v\mbox{-}u}$-star from $\mcP$ and the satellite $x$ from the $(k+1)^+_x$-star,
	two vertex-disjoint $k$-stars centered at $u$ and $x$, respectively, are extracted.\label{fig01}}
\end{figure}

\begin{definition}
\label{def04}
{\em (Operation {\sc Pull-by-$(k, (k+1)^+)$})}
Given two internal $k$-star and $(k+1)^+$-star,
the operation removes the $k$-star from $\mcP$ and a satellite from the $(k+1)^+$-star, and applies {\sc Collect} operations to extract two vertex-disjoint $k^+$-stars
(see for an illustration in Figure~\ref{fig01}).
\end{definition}

\begin{figure}[ht]
\centering\scalebox{1.0}{
  \setlength{\unitlength}{1bp}%
  \begin{picture}(128.64, 100.35)(0,0)
  \put(0,0){\includegraphics{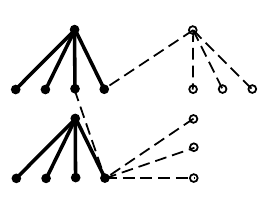}}
  \put(52.52,53.90){\fontsize{11.38}{13.66}\selectfont $v$}
  \put(94.66,85.77){\fontsize{11.38}{13.66}\selectfont $u$}
  \put(51.01,8.12){\fontsize{11.38}{13.66}\selectfont $x$}
  \put(37.29,85.79){\fontsize{11.38}{13.66}\selectfont $w$}
  \put(23.39,40.98){\fontsize{11.38}{13.66}\selectfont $y$}
  \end{picture}%
}
\captionsetup{width=.95\textwidth}
\caption{An illustration to apply operation {\sc Pull-by-$(k, k)$} on two internal $k$-stars centered at $w$ and $y$, which are a $k_{v\mbox{-}u}$-star and a $k_x$-star, respectively.
	The filled vertices are covered, the empty vertices are uncovered, the edges in the internal stars are solid while the dashed edges are in the input graph $G$.
	In this case, $k = 4$, a vertex of the $k_{v\mbox{-}u}$-star other than $v$ is adjacent to $x$.
	After removing the two $k$-stars from $\mcP$, two vertex-disjoint $k$-star centered at $u$ and $(k+1)$-star centered at $x$, respectively, are extracted.\label{fig02}}
\end{figure}

\begin{definition}
\label{def05}
{\em (Operation {\sc Pull-by-$(k, k)$})}
Given two internal $k$-stars,
the operation removes both of them from $\mcP$, and applies {\sc Collect} operations to extract either two vertex-disjoint $k^+$-star and $(k+1)^+$-star,
or three vertex-disjoint $k$-stars
(see for an illustration in Figure~\ref{fig02}).
\end{definition}

Below, we sometimes use simply {\sc Pull} to mean any one of the above four {\sc Pull-by-$(k+1)^+$}, {\sc Pull-by-$k$},
{\sc Pull-by-$(k, (k+1)^+)$}, and {\sc Pull-by-$(k, k)$} operations.
Using {\sc Collect} and {\sc Pull} operations, our algorithm {\sc LocalSearch-$k^+$} repeatedly applies one of them;
when none of them is applicable, the algorithm outputs the current solution $\mcP$ to the $k^+$-star packing problem.
A high-level description of the algorithm is depicted in Figure~\ref{fig03}.

\begin{figure}[h]
\begin{center}
\framebox{
\begin{minipage}{0.9\textwidth}
{\sc LocalSearch-$k^+$} for $k^+$-star packing:\\
Input: A connected graph $G = (V, E)$;\\
Output: A collection $\mcP$ of vertex-disjoint $k^+$-stars in $G$.
\begin{enumerate}
\parskip=0pt
\item
	Initialize $\mcP = \emptyset$ and the remainder graph $R = G$;
\item
	While (a {\sc Collect} or {\sc Pull} operation is applicable) do
	\begin{enumerate}
	\parskip=0pt
	\item
		apply the operation to update $\mcP$;
	\item
		re-apply {\sc Collect} on the center of every internal star;
	\item
		update the remainder graph $R = G[V \setminus V(\mcP)]$;
	\end{enumerate}
\end{enumerate}
\end{minipage}}
\end{center}
\captionsetup{width=.99\textwidth} 
\caption{A high-level description of {\sc LocalSearch-$k^+$} for $k^+$-star packing.\label{fig03}}
\end{figure}

\subsection{Performance analysis}
We fix an optimal $k^+$-star packing, denoted as $\mcQ^*$, for discussion, and denote by $Opt = |V(\mcQ^*)|$ the number of vertices covered by $\mcQ^*$.
For ease of presentation, the stars inside $\mcQ^*$ are referred to as {\em optimal stars}.
We start with some observations on $\mcQ^*$.

\begin{lemma}
\label{lemma03}
At most $k-1$ satellites of an optimal star are uncovered by the computed solution $\mcP$.
\end{lemma}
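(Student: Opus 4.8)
The plan is to fix an arbitrary optimal star $Q \in \mcQ^*$ with center $c$ and satellites $s_1, \ldots, s_m$ (so $m \ge k$), and to bound the number of satellites of $Q$ that lie in the remainder graph $R = G[V \setminus V(\mcP)]$ with respect to the computed solution $\mcP$. The single observation driving the whole argument is that, by the definition of a star, every satellite $s_i$ is adjacent to $c$ in $G$; hence an uncovered satellite of $Q$ is precisely an uncovered neighbor of $c$. It therefore suffices to show that $c$ has at most $k-1$ uncovered neighbors, and this I would establish by a case analysis on the status of $c$ in $\mcP$.

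First I would consider the case that $c$ itself is uncovered. Then $c$ is a vertex of $R$, and each uncovered satellite contributes an edge incident to $c$ in $R$; since the algorithm has terminated and {\sc Collect} is inapplicable, Lemma~\ref{lemma01} bounds the maximum degree of $R$ by $k-1$, so $c$ has at most $k-1$ uncovered neighbors. Next, if $c$ is covered and serves as the center of some internal star, then the second conclusion of Lemma~\ref{lemma01} --- that no uncovered vertex is adjacent to the center of an internal star --- forces every satellite of $Q$ to be covered, giving zero uncovered satellites. Finally, if $c$ is covered but appears as a satellite of some internal star, the first bullet of Lemma~\ref{lemma02} says that a satellite of any internal star is adjacent to at most $k-1$ uncovered vertices, so again $c$ has at most $k-1$ uncovered neighbors. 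These three cases are exhaustive, and in each the number of uncovered satellites of $Q$ is at most $k-1$, which is the claim.

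The argument needs no nontrivial calculation; its only real content is the reduction to the number of uncovered neighbors of the center and the recognition that the termination conditions packaged in Lemmas~\ref{lemma01} and~\ref{lemma02} already cover every possible position of $c$. The point to be careful about is the exhaustiveness of the case split --- a covered $c$ is either the center or a satellite of its internal star, and each subcase must invoke the correct part of the correct lemma --- but no genuine obstacle arises beyond checking that these three situations partition all possibilities.
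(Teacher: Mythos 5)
Your proof is correct and follows essentially the same route as the paper's: the same three-way case split on whether the center is uncovered, the center of an internal star, or a satellite of an internal star. The only cosmetic difference is that you invoke the termination conditions already packaged in Lemmas~\ref{lemma01} and~\ref{lemma02}, whereas the paper argues by contradiction directly from the applicability of {\sc Collect} and {\sc Pull}; these are the same argument.
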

\begin{proof}
We prove by contradiction to let $u$ denote the center of this optimal star.
If $u$ is uncovered, then {\sc Collect($u$)} is applicable to add a $k^+$-star to the computed solution $\mcP$.
If $u$ is the center of an internal star of $\mcP$, then {\sc Collect($u$)} is re-applicable to add more satellites to the internal star.
If $u$ is a satellite of an internal star of $\mcP$, then {\sc Pull-by-$(k+1)^+$} or {\sc Pull-by-$k$} is applicable on the internal star.
Either of the above contradicts the termination of the algorithm.
\end{proof}

By Lemma~\ref{lemma03}, any optimal star contains at most $k$ uncovered vertices by the computed solution $\mcP$, and if exactly $k$ then the center of the optimal star is uncovered.
An optimal star containing exactly $k-i+1$ uncovered vertices is called a Type-$i$ optimal star, for $i = 1, 2$;
an optimal star containing at most $k-2$ uncovered vertices is called a Type-$3$ optimal star.

Let $Apx = |V(\mcP)|$ denote the number of vertices covered by the computed solution $\mcP$.
Let $Opt_i$ ($Apx_i$, respectively) denote the total number of vertices (covered vertices by $\mcP$, respectively) in all the Type-$i$ optimal stars, for $i = 1, 2, 3$;
and let $Apx_4$ denote the total number of covered vertices that fall outside of any optimal star.
It follows that
\begin{equation}
\label{eq01}
Opt = Opt_1 + Opt_2 + Opt_3, \mbox{ } Apx = Apx_1 + Apx_2 + Apx_3 + Apx_4.
\end{equation}

When $k \ge 3$, for each $i = 1, 2, 3$, $k-i+1 \ge 1$;
$Opt_i - Apx_i$ is the total number of uncovered vertices by $\mcP$ in all the Type-$i$ optimal stars,
and thus there are at least $\frac {Opt_i - Apx_i}{k-i+1}$ such Type-$i$ optimal stars.
On the other hand, there are at most $\frac {Apx_i}i$ such Type-$i$ optimal stars.
Therefore, $\frac {Opt_i - Apx_i}{k-i+1} \le \frac {Apx_i}i$, i.e.,
\begin{equation}
\label{eq02}
Opt_i \leq \frac{k+1}{i} Apx_i, i = 1, 2, 3,
\end{equation}
where the inequality becomes an equality when $i = 1, 2$.
When $k = 2$ and $i = 3$, $k-i+1 = 0$ and thus $Opt_i = Apx_i$, i.e., Eq.~(\ref{eq02}) still holds.

A satellite of a Type-$1$ optimal star is called a {\em critical} vertex (or simply a $c$-vertex) if it is covered by $\mcP$.
There are a total of $Apx_1$ $c$-vertices, stated as the first half in the next lemma.

\begin{lemma}
\label{lemma04}
There are a total of $Apx_1$ $c$-vertices, and each $c$-vertex is a satellite of a distinct internal $k$-star in $\mcP$.
\end{lemma}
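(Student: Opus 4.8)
The plan is to first dispatch the counting half, which is immediate, and then to locate where a $c$-vertex must sit inside $\mcP$. For the count: a Type-$1$ optimal star has exactly $k$ uncovered vertices, so by Lemma~\ref{lemma03} its center is uncovered, and hence every vertex of it that is covered by $\mcP$ is a satellite, i.e.\ a $c$-vertex; conversely every $c$-vertex is such a covered satellite. Therefore the number of $c$-vertices equals the total number of vertices of Type-$1$ optimal stars covered by $\mcP$, which is exactly $Apx_1$.

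The structural fact I would isolate next is the hook connecting $c$-vertices to Lemma~\ref{lemma02}. Let $v$ be a $c$-vertex lying in a Type-$1$ optimal star with uncovered center $u$. Then $u$ has exactly $k-1$ uncovered satellites, all in the remainder graph $R$; by Lemma~\ref{lemma01} the degree of $u$ in $R$ is at most $k-1$, and since $u$ is already adjacent to those $k-1$ uncovered satellites, its $R$-degree is exactly $k-1$. Thus $u$ together with its $k-1$ uncovered satellites is an \emph{outside $(k-1)$-star}, and $v$, being covered and adjacent to $u$ in $G$, is a covered neighbor of the center of this outside $(k-1)$-star. This is precisely the hypothesis pattern of the bullets of Lemma~\ref{lemma02}.

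Given this, the location of $v$ follows. First, $v$ cannot be the center of an internal star: otherwise the uncovered vertex $u$ would be adjacent to the center of an internal star, contradicting Lemma~\ref{lemma01}. So $v$ is a satellite of some internal $k^+$-star $S$. Next, $S$ cannot be a $(k+1)^+$-star, since then $v$ would be a satellite of an internal $(k+1)^+$-star adjacent to the center $u$ of an outside $(k-1)$-star, which is forbidden by the second bullet of Lemma~\ref{lemma02}. Hence $S$ is an internal $k$-star, establishing that every $c$-vertex is a satellite of an internal $k$-star. For distinctness I would argue by contradiction, assuming two distinct $c$-vertices $v_1, v_2$ are satellites of the same internal $k$-star $S$, with respective Type-$1$ optimal-star centers $u_1, u_2$ heading outside $(k-1)$-stars $T_1, T_2$ as above. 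If $u_1 = u_2$, then $S$ has two satellites adjacent to the center of the single outside $(k-1)$-star $T_1$, contradicting the third bullet of Lemma~\ref{lemma02}. If $u_1 \ne u_2$, then $T_1$ and $T_2$ sit in distinct, hence vertex-disjoint, optimal stars, so $S$ has two satellites each adjacent to the center of one of two vertex-disjoint outside $(k-1)$-stars; this matches a configuration enabling {\sc Pull-by-$k$}: after removing $S$ and re-applying {\sc Collect} at $u_1$ and $u_2$ one extracts either two vertex-disjoint $k$-stars or a $(k+1)^+$-star, strictly increasing the coverage, which contradicts the termination of the algorithm.

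I expect the main obstacle to be the structural observation of the second paragraph, namely pinning the $R$-degree of $u$ to exactly $k-1$ so that the phrase ``adjacent to the center of an outside $(k-1)$-star'' holds verbatim; this is what makes the bullets of Lemma~\ref{lemma02} directly applicable rather than merely suggestive. A secondary care point is the disjoint case of distinctness, where one must verify that removing $S$ genuinely admits an improving extraction (in either branch of {\sc Pull-by-$k$}), the only delicate issue being whether a single {\sc Collect} could claim both $v_1$ and $v_2$; checking that the net coverage change stays positive in every such subcase confirms that the putative {\sc Pull-by-$k$} step is a genuine improving operation ruled out at termination.
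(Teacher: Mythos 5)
Your proof is correct and follows essentially the same route as the paper's: both halves rest on the non-applicability of {\sc Collect}, {\sc Pull-by-$(k+1)^+$} and {\sc Pull-by-$k$} at termination, with the paper citing these operations directly while you route the same facts through Lemma~\ref{lemma01} and the bullets of Lemma~\ref{lemma02}. Your extra step pinning the $R$-degree of the uncovered center $u$ to exactly $k-1$ (so that $u$ genuinely heads an outside $(k-1)$-star) is a useful explicit justification that the paper leaves implicit.
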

\begin{proof}
Recall the definition of a Type-$1$ optimal star centered at $u$, that exactly $k-1$ satellites and the center $u$ are uncovered by $\mcP$.
If a $c$-vertex in this optimal star is a satellite of an internal $(k+1)^+$-star in $\mcP$, then {\sc Pull-by-$(k+1)^+$} is applicable on this internal star;
if a $c$-vertex in this optimal star is the center of an internal $k^+$-star in $\mcP$, then {\sc Collect} is re-applicable on this vertex.
Therefore, each $c$-vertex in this optimal star is a satellite of an internal $k$-star.
Next, if two $c$-vertices in a Type-$1$ optimal star, or in two different Type-$1$ optimal stars, both are satellites of the same internal $k$-star,
then {\sc Pull-by-$k$} is applicable on this internal $k$-star.
This proves the second half of the lemma.
\end{proof}

Lemma~\ref{lemma04} states that a $c$-vertex maps injectively to an internal $k$-star.
Continue from the proof of Lemma~\ref{lemma04} where the $c$-vertex $v$ is a satellite of a Type-$1$ optimal star centered at $u$,
and exactly $k-1$ satellites and the center $u$ are uncovered by $\mcP$.
One thus sees (from the perspective of $\mcP$) that the internal $k$-star to which $v$ belongs (or maps) is a $k_{v\mbox{-}u}$-star.
We call all the other vertices in the internal $k_{v\mbox{-}u}$-star the {\em adjugates} of $v$, or simply the $a$-vertices associated with the $c$-vertex $v$.
This way, each $c$-vertex is associated with exactly $k$ $a$-vertices,
and thus by Lemma~\ref{lemma04} there are exactly $k \times Apx_1$ $a$-vertices in total.

We next examine where these $a$-vertices sit in the optimal $k^+$-star packing solution $\mcQ^*$.
We first do not distinguish whether $k \ge 3$ or $k = 2$, while treating a $(k-2)$-star as a single vertex (i.e., the center vertex attached with no satellite) if $k = 2$.
Then we pay special attention when $k = 2$, as some details become very different, for example, Lemma~\ref{lemma05} versus Lemma~\ref{lemma06};
we also add another {\sc Pull} operation to achieve a better performance guarantee.

\subsubsection{Case 1 where $k \ge 2$}
\begin{lemma}
\label{lemma05}
Each Type-$2$ optimal star contains at most one $a$-vertex.
\end{lemma}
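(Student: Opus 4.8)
The claim is that each Type-$2$ optimal star contains at most one $a$-vertex. Recall the definitions in play: a Type-$2$ optimal star is an optimal star of which exactly $k-1$ vertices are uncovered by $\mathcal{P}$ (equivalently, exactly one more than a Type-$1$ star is covered); an $a$-vertex is one of the $k$ non-$c$ vertices sitting in some internal $k_{v\text{-}u}$-star, where $v$ is a $c$-vertex that is a satellite of a Type-$1$ optimal star centered at $u$, and $u$ is the center of the outside $(k-1)$-star to which $v$ attaches. The plan is to argue by contradiction: if a Type-$2$ optimal star $O$ contains two $a$-vertices, then a {\sc Pull} operation — most plausibly {\sc Pull-by-$(k,k)$} or {\sc Pull-by-$(k,(k+1)^+)$} — would still be applicable, contradicting the termination of {\sc LocalSearch-$k^+$}.

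**Main steps.** First I would unpack what two $a$-vertices in $O$ buys us structurally. Each $a$-vertex lies in an internal $k_{v\text{-}u}$-star, so there are $c$-vertices $v_1, v_2$ (satellites of Type-$1$ optimal stars, centered at $u_1, u_2$) whose internal $k$-stars each donate an $a$-vertex into $O$. I would first handle the subcase where the two $a$-vertices come from \emph{two distinct} internal $k$-stars $S_1, S_2$, and then the subcase where they come from the \emph{same} internal $k$-star. The key observation is that $O$ is a Type-$2$ star, so $O$ contributes a cluster of uncovered vertices: since the center of a Type-$2$ star is covered (only $k-1 < k$ vertices are uncovered, and by Lemma~\ref{lemma03} if exactly $k$ are uncovered the center is uncovered), the covered center of $O$ together with its uncovered satellites forms an outside star structure anchored at a covered vertex. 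The two $a$-vertices sitting in $O$, being covered vertices of $O$, are either the center of $O$ or satellites of $O$; in either configuration they are adjacent (within $G$) to the uncovered portion of $O$, which supplies an outside $(k-1)$- or $(k-2)$-star for the {\sc Pull} bookkeeping.

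**The combinatorial heart.** The crux is to show that removing $S_1$ and $S_2$ (in the distinct-star subcase) frees enough uncovered neighbors to re-extract strictly more covered vertices than were removed, so that {\sc Pull-by-$(k,k)$} applies. When we delete $S_1$ and $S_2$, the two $c$-vertices $v_1, v_2$ release their attachments to the outside $(k-1)$-stars at $u_1, u_2$; simultaneously the two $a$-vertices inside $O$, now freed, become adjacent to the $k-1$ uncovered satellites of $O$ and to whatever outside structure $O$ carries. I would count: two removed internal $k$-stars cover $2k$ vertices, while {\sc Collect} should be re-extractable to yield, say, a $k$-star at $u_1$, a $k$-star at $u_2$, and one more $k$-star (or a $(k+1)^+$-star) built from $O$'s uncovered satellites together with a freed $a$-vertex as center — matching the ``three vertex-disjoint $k$-stars'' or ``$k^+$-star and $(k+1)^+$-star'' outcome in Definition~\ref{def05}. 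This is exactly where I expect the main obstacle to lie: verifying that the uncovered neighborhoods do not overlap (so the extracted stars are genuinely vertex-disjoint) and that degree counts reach $k$ everywhere. The disjointness of the two outside $(k-1)$-stars at $u_1, u_2$, and their disjointness from $O$'s uncovered satellites, must be argued carefully — potentially invoking the third bullet of Lemma~\ref{lemma02}, which already forbids certain double-attachment configurations at an internal $k$-star, to rule out the degenerate overlaps. For the same-internal-star subcase, two $a$-vertices in one $S_1$ means $S_1$ alone, once removed, frees its center/satellites to reach into $O$'s uncovered part twice, so a single {\sc Pull-by-$k$} extracting a $(k+1)^+$-star or two $k$-stars should already apply, again contradicting termination. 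I would organize the proof so that each configuration reduces to ``an applicable {\sc Pull} operation was overlooked,'' closing by the assumed non-applicability of all operations at termination.
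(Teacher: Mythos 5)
There is a genuine gap, and it stems from two structural misreadings that your construction then relies on. First, you assert that the center of a Type-$2$ optimal star must be covered. This does not follow from Lemma~\ref{lemma03}, which bounds only the number of uncovered \emph{satellites} by $k-1$: a Type-$2$ star has $k-1$ uncovered vertices in total, and one of them may be the center. The paper splits precisely on this (center uncovered vs.\ covered), and the uncovered-center case is exactly the one in which a single $a$-vertex can legitimately survive, so it cannot be dismissed. Second, and more damaging, you claim that an $a$-vertex lying in $O$ is ``adjacent to the uncovered portion of $O$'' whether it is the center or a satellite of $O$. A satellite of a star is adjacent, within that star, only to the center; so when the center $w$ of $O$ is covered, a freed $a$-vertex that is a satellite of $O$ has no guaranteed edge to the $k-1$ uncovered satellites of $O$. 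Your plan to build a third star ``from $O$'s uncovered satellites together with a freed $a$-vertex as center'' therefore fails exactly in the covered-center case. There the correct move (the paper's second case) routes through $w$ itself: $w$ is a satellite of an internal $k^+_w$-star adjacent to those $k-1$ uncovered vertices, and one applies {\sc Pull-by-$(k, (k+1)^+)$} or {\sc Pull-by-$(k, k)$} to the pair consisting of one $k_{v\mbox{-}u}$-star and $w$'s \emph{own} internal star, producing a new star centered at $w$ with the freed $a$-vertex as a satellite. Note this needs only \emph{one} $a$-vertex, so the covered-center case in fact admits none; by contrast, removing $S_1$ and $S_2$ alone gives no guaranteed gain there (you also miscount: two internal $k$-stars cover $2(k+1)$ vertices, not $2k$, and re-extracting only the two $k$-stars at $u_1, u_2$ covers $2(k+1)$ vertices while orphaning $2k$ freed ones --- a net change of zero, so the operation is not applicable).

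Your two subcases (same vs.\ distinct internal $k$-stars) do match the paper's \emph{first} case, where the center $w$ of $O$ is uncovered: there the freed $a$-vertices really are adjacent to $w$, and {\sc Pull-by-$k$} or {\sc Pull-by-$(k, k)$} yields a $k^+$-star centered at $w$ (its $k-2$ uncovered satellites plus the two freed $a$-vertices) together with $k$-stars at $u_1$ (and $u_2$), which is a strict improvement. So the arithmetic you sketch is salvageable, but only after restoring the missing case split on whether $w$ is covered and reassigning the constructions accordingly; as written, your two claims (center always covered; satellite $a$-vertices adjacent to the uncovered part) are mutually inconsistent, and the covered-center case is left unproved.
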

\begin{proof}
Recall that a Type-$2$ optimal star contains exactly $k-1$ uncovered vertices by $\mcP$.
Consider a Type-$2$ optimal star and we distinguish whether its center vertex is covered by $\mcP$ or not.

In the first case, the center vertex $w$ is uncovered.
If the optimal star contains two $a$-vertices associated with the same $c$-vertex $v$ in the internal $k_{v\mbox{-}u}$-star,
then {\sc Pull-by-$k$} is applicable on the $k_{v\mbox{-}u}$-star to replace it with a $k$-star centered at $u$ and a $k^+$-star centered at $w$
(see for an illustration at the top of Figure~\ref{fig04}).
If the optimal star contains two $a$-vertices associated with two different $c$-vertices, $v$ and $y$,
which are in the internal $k_{v\mbox{-}u}$-star and $k_{y\mbox{-}x}$-star, respectively,
then {\sc Pull-by-$(k, k)$} is applicable on these two internal $k$-stars to replace them with a $k^+$-star centered at $w$ and
two $k$-stars centered at $u$ and $x$, respectively
(see for an illustration at the bottom of Figure~\ref{fig04}, and note that $u \ne x$ by Lemma~\ref{lemma04}).
Therefore, in this case the Type-$2$ optimal star contains at most one $a$-vertex.

\begin{figure}[ht]
\centering\scalebox{1.0}{
  \setlength{\unitlength}{1bp}%
  \begin{picture}(202.45, 110.14)(0,0)
  \put(0,0){\includegraphics{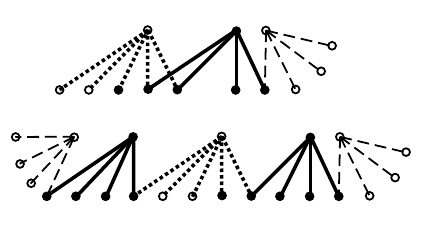}}
  \put(129.74,95.57){\fontsize{11.38}{13.66}\selectfont $u$}
  \put(72.37,95.58){\fontsize{11.38}{13.66}\selectfont $w$}
  \put(126.19,59.16){\fontsize{11.38}{13.66}\selectfont $v$}
  \put(165.25,44.52){\fontsize{11.38}{13.66}\selectfont $u$}
  \put(107.88,44.54){\fontsize{11.38}{13.66}\selectfont $w$}
  \put(161.70,8.12){\fontsize{11.38}{13.66}\selectfont $v$}
  \put(20.12,8.34){\fontsize{11.38}{13.66}\selectfont $y$}
  \put(35.42,46.82){\fontsize{11.38}{13.66}\selectfont $x$}
  \end{picture}%
}
\captionsetup{width=.99\textwidth}
\caption{An illustration to apply operation {\sc Pull-by-$k$} when the center vertex of the Type-$2$ optimal star is uncovered and contains two $a$-vertices.
	In this case, $k = 4$;
	at the top these two $a$-vertices are associated with the same $c$-vertex;
	at the bottom these two $a$-vertices are associated with different $c$-vertices.
	The edges in Type-$2$ optimal star are dotted.\label{fig04}}
\end{figure}

In the second case, the center vertex $w$ is covered and thus it is adjacent to the $k-1$ uncovered vertices in the optimal star.
This way, $w$ is a satellite of an internal $k^+_w$-star.
If $w$ is an $a$-vertex associated with a $c$-vertex $v$ in the internal $k_{v\mbox{-}u}$-star,
then {\sc Pull-by-$k$} is applicable on the $k_{v\mbox{-}u}$-star to replace it with two $k$-stars centered at $u$ and $w$, respectively.
If a satellite $y$ of the Type-$2$ optimal star is an $a$-vertex associated with a $c$-vertex $v$ in the $k_{v\mbox{-}u}$-star,
then depending on the size of the internal $k^+_w$-star,
either {\sc Pull-by-$(k, (k+1)^+)$} is applicable on the internal $k_{v\mbox{-}u}$-star and $(k+1)^+_w$-star to replace the $k_{v\mbox{-}u}$-star and $w$
	with a $k$-star centered at $u$ and a $k^+$-star centered at $w$,
or {\sc Pull-by-$(k, k)$} is applicable on the internal $k_{v\mbox{-}u}$-star and $k_w$-star to replace them with a $k$-star centered at $u$ and a $(k+1)^+$-star centered at $w$.
Therefore, in this case the Type-$2$ optimal star contains no $a$-vertex.
\end{proof}

\begin{theorem}
\label{thm01}
The algorithm {\sc LocalSearch-$k^+$} is an $O(|V|^3 |E|)$-time $(1 + \frac {k^2}{2k+1})$-approximation for the $k^+$-star packing problem for any $k \ge 2$.
\end{theorem}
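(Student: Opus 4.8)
The plan is to establish the running-time bound first and then the approximation ratio, the latter by combining the type-based inequalities of Eq.~(\ref{eq02}) with one additional inequality that records how the $a$-vertices are forced to spread out among the optimal stars.

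For the running time, I would first check that every {\sc Collect} and {\sc Pull} operation strictly increases $|V(\mcP)|$ by at least one: {\sc Collect} adds a fresh star, and each {\sc Pull} variant removes a bounded set of covered vertices and then re-covers strictly more via {\sc Collect}, a routine per-operation count. Since $0 \le |V(\mcP)| \le |V|$, the while-loop makes at most $|V|$ passes. In each pass the dominant cost is testing applicability of the two-star operations {\sc Pull-by-$(k,(k+1)^+)$} and {\sc Pull-by-$(k,k)$}: there are $O(|V|)$ internal stars, hence $O(|V|^2)$ pairs, and for each pair the test inspects the $O(1)$ (for fixed $k$) satellites and centers together with their adjacencies to uncovered vertices and outside-star centers in $O(|E|)$ time. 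This gives $O(|V|^2|E|)$ per pass and $O(|V|^3|E|)$ overall.

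For the ratio, recall the decomposition $Opt=Opt_1+Opt_2+Opt_3$, $Apx=Apx_1+Apx_2+Apx_3+Apx_4$ in Eq.~(\ref{eq01}) and the bounds $Opt_i \le \frac{k+1}{i}Apx_i$ of Eq.~(\ref{eq02}). By Lemmas~\ref{lemma03} and~\ref{lemma04} there are exactly $Apx_1$ $c$-vertices and exactly $k\cdot Apx_1$ distinct $a$-vertices, all covered. I would split the $a$-vertices as $a_1,a_2,a_3,a_4$ according to whether each lies in a Type-$1$, Type-$2$, Type-$3$ optimal star, or outside all optimal stars, so $a_1+a_2+a_3+a_4 = k\cdot Apx_1$. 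The crucial claim is $a_1=0$. Indeed, a Type-$1$ optimal star has an uncovered center, so every covered vertex in it is a satellite and hence a $c$-vertex; but an $a$-vertex cannot be a $c$-vertex: each $a$-vertex belongs to a unique internal star, the $k_{v\mbox{-}u}$-star $S$ it was drawn from, and $S$ is already charged by Lemma~\ref{lemma04} to its $c$-vertex $v$; were that $a$-vertex also a $c$-vertex, by Lemma~\ref{lemma04} it would be a satellite of a \emph{distinct} internal $k$-star, forcing it to be a second $c$-vertex of $S$ and violating the injectivity of Lemma~\ref{lemma04}. Hence $a_2+a_3+a_4 = k\cdot Apx_1$. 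Now Lemma~\ref{lemma05} gives $a_2\le(\text{number of Type-}2\text{ stars})\le \frac{Apx_2}{2}$ (each Type-$2$ star covers at least two vertices, as used for Eq.~(\ref{eq02})), while trivially $a_3\le Apx_3$ and $a_4\le Apx_4$; adding these yields the key inequality
\begin{equation}
\label{eq-key}
k\cdot Apx_1 \le \frac{1}{2}Apx_2 + Apx_3 + Apx_4 .
\end{equation}

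Finally I would combine the four inequalities. A direct expansion of $(k+1)^2\,Apx-(2k+1)\,Opt$, using $Opt=Opt_1+Opt_2+Opt_3$ and $Opt_i\le\frac{k+1}{i}Apx_i$, reduces the target inequality $Opt\le\frac{(k+1)^2}{2k+1}Apx=(1+\frac{k^2}{2k+1})Apx$ to the single inequality $k\cdot Apx_1\le\frac{1}{2}Apx_2+\frac{k+2}{3}Apx_3+(k+1)Apx_4$ (the coefficients $-k,\frac{1}{2},\frac{k+2}{3}$ of $Apx_1,Apx_2,Apx_3$ arising as $(k+1)-\frac{2k+1}{i}$, and the coefficient $k+1$ of $Apx_4$). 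Since $\frac{k+2}{3}\ge 1$ and $k+1\ge 1$ for $k\ge 2$, this is implied termwise by Eq.~(\ref{eq-key}), completing the proof. I expect the main obstacle to be Eq.~(\ref{eq-key}), and within it the identity $a_1=0$: replacing it by the weaker $a_1\le Apx_1$ leaves only $(k-1)Apx_1$ on the left and degrades the bound to $\frac{k(k+1)}{2k-1}$ (already $2>\frac{9}{5}$ at $k=2$), so the injective placement of $c$-vertices supplied by Lemma~\ref{lemma04} is doing the essential work, and the concluding algebra is routine.
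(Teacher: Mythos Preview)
Your proposal is correct and follows essentially the same route as the paper: the same type decomposition, the same key inequality $k\,Apx_1\le\tfrac12 Apx_2+Apx_3+Apx_4$ derived from Lemmas~\ref{lemma04}--\ref{lemma05}, and an equivalent final combination (your rearrangement to $kApx_1\le\tfrac12Apx_2+\tfrac{k+2}{3}Apx_3+(k+1)Apx_4$ is the paper's $\alpha$-computation in different clothing). Your explicit verification that $a_1=0$ --- i.e., that no $a$-vertex can itself be a $c$-vertex, via the injectivity of Lemma~\ref{lemma04} --- actually makes precise a step the paper takes for granted when it writes $k\,Apx_1\le \tfrac12 Apx_2+Apx_3+Apx_4$; one minor wording slip is that a $(k{+}1)^+$-star need not have $O(1)$ satellites, but since the total number of satellites over all internal stars is $O(|V|)$ your $O(|V|^2|E|)$-per-iteration bound still stands.
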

\begin{proof}
We first analyze the time complexity of the algorithm.
Note that a {\sc Collect} operation needs to scan for the degrees of all the vertices in the remainder graph, which can be obtained in $O(|E|)$ time
(noting that by maintaining certain data structure such as a heap, this can be done faster),
followed by extracting a $k^+$-star and updating the remainder graph in $O(|V|)$ time.
A {\sc Pull} operation requires first trying one vertex out of a $(k+1)^+$-star, or trying two $k$-stars,
returning a maximum number $2(k+1)$ of vertices back to the remainder graph,
then extracting a maximum number $3$ of $k^+$-stars, and thus is done in $O(|V|^2 |E|)$ time.
Since each iteration increases the number of covered vertices by at least one, there are $O(|V|)$ iterations.
At the end of each operation, the algorithm re-applies {\sc Collect} operations on the centers of internal $k^+$-stars in $O(|E|)$ time.
These together imply the overall running time of the algorithm in $O(|V|^3 |E|)$.

To prove the performance ratio,
we recall that a Type-$2$ optimal star contains exactly $k-1$ uncovered vertices by $\mcP$, and thus at least two covered vertices.
Lemma~\ref{lemma05} states that at most one of the covered vertices in a Type-$2$ optimal star is an $a$-vertex.
It follows from Lemma~\ref{lemma04} that the total number of $a$-vertices is $k \times Apx_1 \le \frac 12 Apx_2 + Apx_3 + Apx_4$.
Combining this upper bound with Eqs.~(\ref{eq01},\ref{eq02}), and setting $\alpha = 1 + \frac {k^2}{2k+1}$, for any $k \ge 2$ we have
\[
\begin{array}{rcl}
Opt & = & Opt_1 + Opt_2 + Opt_3 \leq (k+1) Apx_1 + \frac {k+1}2 Apx_2 + \frac {k+1}3 Apx_3\\
   & = & \alpha Apx_1 + (k+1 - \alpha) Apx_1 + \frac {k+1}2 Apx_2 + \frac {k+1}3 Apx_3\\
   &\le & \alpha Apx_1 + (k+1 - \alpha) \cdot \frac 1k \cdot (\frac 12 Apx_2 + Apx_3 + Apx_4) + \frac {k+1}2 Apx_2 + \frac {k+1}3 Apx_3\\
   & = &  \alpha Apx_1 + (\frac {k+1 - \alpha}{2k} + \frac{k+1}2) Apx_2 + (\frac {k+1 - \alpha}k + \frac {k+1}3) Apx_3 + \frac {k+1 - \alpha}k Apx_4\\
   &\le & \max\{\alpha, \frac {k+1 - \alpha}{2k} + \frac {k+1}2, \frac {k+1 - \alpha}k + \frac {k+1}3\}(Apx_1 + Apx_2 + Apx_3 + Apx_4)\\
   &\le & (1 + \frac {k^2}{2k+1}) Apx.
\end{array}
\]
This proves the theorem.
\end{proof}

\subsubsection{Case 2 where $k = 2$}
The algorithm {\sc LocalSearch-$k^+$} works for every $k \ge 2$, and it is referred to as the general algorithm.
One sees that when $k = 2$, the general algorithm works out to be a $\frac 95$-approximation algorithm,
which ties the $\frac 95$-approximation algorithm by Huang et al.~\cite{HZG24} designed specifically for the $2^+$-star packing problem.
We note that the main design idea in the algorithm by Huang et al. is to compute a maximal number of balanced trees in the input graph and then to acquire $2^+$stars from the trees.

For convenience, we continue to use $k$ inside the name of the algorithm {\sc LocalSearch-$k^+$} and most notations as in the above,
but keep in mind that in this subsection $k = 2$.
We design an additional {\sc Pull} operation that works on three internal $k$-stars to increase the covered vertices.

\begin{definition}
\label{def06}
{\em (Operation {\sc Pull-by-$(k, k, k)$})}
Given three internal $k$-stars,
the operation removes all of them from $\mcP$ and applies {\sc Collect} operations to extract three stars, of which two are $k$-stars and the other is a $(k+1)^+$-star.
\end{definition}

\begin{lemma}
\label{lemma06}
Each Type-$2$ optimal star contains no $a$-vertex.
\end{lemma}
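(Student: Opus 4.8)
The plan is to mirror the structure of the proof of Lemma~\ref{lemma05}, but to exploit the newly added operation {\sc Pull-by-$(k,k,k)$} (with $k=2$) to rule out the single $a$-vertex that Lemma~\ref{lemma05} could not eliminate. Recall that a Type-$2$ optimal star has exactly $k-1=1$ uncovered vertex, so with $k=2$ it is a $2$-star with exactly one uncovered vertex and two covered vertices (either its center plus one satellite are covered, or both satellites are covered). I would set up the proof by contradiction: assume the Type-$2$ optimal star contains an $a$-vertex, and show that one of the {\sc Pull} operations --- in the critical subcase, the new {\sc Pull-by-$(k,k,k)$} --- is applicable, contradicting termination.

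First I would split into the same two cases as in Lemma~\ref{lemma05}, according to whether the center $w$ of the Type-$2$ optimal star is uncovered or covered. The case where $w$ is covered should go through essentially verbatim as in Lemma~\ref{lemma05}: a covered center adjacent to the unique uncovered satellite is a satellite of an internal $k^+_w$-star, and if the star or one of its covered satellites is an $a$-vertex, then {\sc Pull-by-$(k,(k+1)^+)$} or {\sc Pull-by-$(k,k)$} fires. The genuinely new work is in the case where $w$ is \emph{uncovered}; here the two covered vertices are both satellites of the optimal star. When $k=2$, Lemma~\ref{lemma05} only managed to forbid \emph{two} $a$-vertices in this star (via {\sc Pull-by-$k$} or {\sc Pull-by-$(k,k)$} producing a $k^+$-star centered at the uncovered $w$, which now has only one uncovered neighbor and hence cannot by itself anchor a $k$-star when $k=2$). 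The point of {\sc Pull-by-$(k,k,k)$} is precisely that releasing a \emph{third} internal $2$-star frees up enough uncovered vertices to let $w$ be collected into a $k$-star after all.

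Concretely, in the uncovered-$w$ case I would suppose the star contains a single $a$-vertex, associated with a $c$-vertex $v$ lying in an internal $k_{v\text{-}u}$-star. Since $w$ is an uncovered neighbor of this $a$-vertex but $w$ has degree at most $k-1=1$ in the remainder graph (Lemma~\ref{lemma01}), $w$ alone cannot become a center of a $2$-star; I would trace which internal stars the other remainder-neighbors of $u$, $w$, and the relevant satellites belong to, and argue that together with the $k_{v\text{-}u}$-star they furnish two further internal $k$-stars whose removal returns enough uncovered vertices to extract two $k$-stars (centered at $u$ and elsewhere) plus a $(k+1)^+$-star absorbing $w$. That is exactly the profile {\sc Pull-by-$(k,k,k)$} is defined to produce (two $k$-stars and one $(k+1)^+$-star), yielding a net gain of one covered vertex and contradicting termination.

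The main obstacle I anticipate is bookkeeping in this last subcase: I must verify that the three internal $2$-stars I invoke are genuinely \emph{distinct} and vertex-disjoint (using the injectivity of the $c$-vertex-to-$k$-star map from Lemma~\ref{lemma04} and the structural restrictions in Lemma~\ref{lemma02}), and that after releasing their six vertices together with $w$ the remainder graph really admits the claimed decomposition into two $2$-stars and one $3^+$-star rather than something smaller. I would handle this by carefully counting uncovered neighbors available to $u$, to the freed satellites, and to $w$, and checking that no configuration escapes coverage by {\sc Pull-by-$(k,k,k)$}; the delicate point is confirming that the released vertices cannot merely re-form the original packing (which would give no gain), which is where the constraint that $w$ becomes a satellite of the $(k+1)^+$-star, strictly increasing the count, is essential.
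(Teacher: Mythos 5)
There is a genuine gap, and it lies exactly in the case you identify as "the genuinely new work." Your covered-center case matches the paper. But for the uncovered-center case you assert that the single $a$-vertex can only be eliminated by assembling \emph{three} internal $2$-stars and firing {\sc Pull-by-$(k,k,k)$}, and your plan for producing the two additional internal $2$-stars is only a promise to "trace which internal stars the other remainder-neighbors belong to." Nothing guarantees that those neighbors lie in internal $k$-stars at all (they could lie in $(k+1)^+$-stars, or the needed adjacencies may simply be absent), so the configuration you want {\sc Pull-by-$(k,k,k)$} to act on need not exist. This is not a bookkeeping issue you could patch by more careful counting; the route itself is wrong.

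The step you are missing is much simpler. With $k=2$, let the $a$-vertex be a covered satellite of the Type-$2$ optimal star whose center $x$ is uncovered; this $a$-vertex is also a \emph{satellite} of its internal $k_{v\text{-}u}$-star (it cannot be that star's center, since by Lemma~\ref{lemma01} no uncovered vertex such as $x$ is adjacent to the center of an internal star). Now apply {\sc Pull-by-$k$} to the $k_{v\text{-}u}$-star alone: once that star is removed, its own center becomes uncovered, so the $a$-vertex has \emph{two} uncovered neighbors --- the freed center and $x$ --- and can itself be collected as the center of a new $2$-star, while $u$ (adjacent to $v$ and to its outside uncovered satellite) centers a second one. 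Your diagnosis that "$w$ has only one uncovered neighbor and hence cannot anchor a $k$-star" goes astray because you try to center the new star at the uncovered vertex $x$ rather than at the $a$-vertex. Consequently no third star is needed here; the operation {\sc Pull-by-$(k,k,k)$} is introduced for Lemma~\ref{lemma07} (Type-$32$ optimal stars), not for this lemma.
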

\begin{proof}
Recall that a Type-$2$ optimal star contains exactly one uncovered vertices by $\mcP$.
Consider a Type-$2$ optimal star and we distinguish whether its center vertex is covered by $\mcP$ or not.

In the first case, the center vertex $x$ is uncovered.
If the optimal star contains an $a$-vertex $w$ associated with a $c$-vertex $v$ in the internal $k_{v\mbox{-}u}$-star,
then {\sc Pull-by-$k$} is applicable on the $k_{v\mbox{-}u}$-star to replace it with two $k$-stars centered at $w$ and $u$, respectively
(see Figure~\ref{fig05} for an illustration).
Therefore, in this case the Type-$2$ optimal star contains no $a$-vertex.

\begin{figure}[ht]
\centering\scalebox{1.0}{
  \setlength{\unitlength}{1bp}%
  \begin{picture}(117.34, 59.10)(0,0)
  \put(0,0){\includegraphics{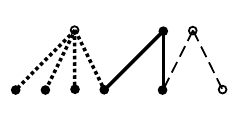}}
  \put(49.21,8.34){\fontsize{11.38}{13.66}\selectfont $w$}
  \put(94.66,44.52){\fontsize{11.38}{13.66}\selectfont $u$}
  \put(37.29,44.54){\fontsize{11.38}{13.66}\selectfont $x$}
  \put(77.12,8.12){\fontsize{11.38}{13.66}\selectfont $v$}
  \end{picture}%
}
\captionsetup{width=.99\textwidth}
\caption{An illustration to apply operation {\sc Pull-by-$k$} when the center vertex of the Type-$2$ optimal star is uncovered and contains an $a$-vertex.
	The edges in Type-$2$ optimal star are dotted.\label{fig05}}
\end{figure}

\begin{figure}[ht]
\centering\scalebox{1.0}{
  \setlength{\unitlength}{1bp}%
  \begin{picture}(145.45, 112.17)(0,0)
  \put(0,0){\includegraphics{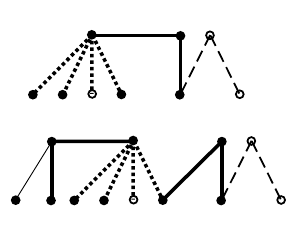}}
  \put(102.91,95.28){\fontsize{11.38}{13.66}\selectfont $u$}
  \put(45.32,97.61){\fontsize{11.38}{13.66}\selectfont $x$}
  \put(85.37,58.87){\fontsize{11.38}{13.66}\selectfont $v$}
  \put(122.77,44.52){\fontsize{11.38}{13.66}\selectfont $u$}
  \put(65.18,46.85){\fontsize{11.38}{13.66}\selectfont $x$}
  \put(105.23,8.12){\fontsize{11.38}{13.66}\selectfont $v$}
  \put(77.87,8.35){\fontsize{11.38}{13.66}\selectfont $y$}
  \end{picture}%
}
\captionsetup{width=.99\textwidth}
\caption{An illustration to apply operation {\sc Pull-by-$k$} when the center vertex of the Type-$2$ optimal star is an $a$-vertex (top),
	or to apply operation {\sc Pull-by-$(k, (k+1)^+)$} or {\sc Pull-by-$(k, k)$} when the center vertex of the Type-$2$ optimal star is covered but not an $a$-vertex (bottom).
	The edges in Type-$2$ optimal star are dotted;
	the thin edge in the bottom is a possible edge in the internal $k^+_x$-star, of which the existence determines which one of the two operations to be applied.\label{fig06}}
\end{figure}

In the second case, the center vertex $x$ is covered and thus it is adjacent to the uncovered vertex in the optimal star.
This way, $x$ is a satellite of an internal $k^+_x$-star.
If $x$ is an $a$-vertex associated with a $c$-vertex $v$ in the internal $k_{v\mbox{-}u}$-star,
then {\sc Pull-by-$k$} is applicable on the $k_{v\mbox{-}u}$-star to replace it with two $k$-stars centered at $u$ and $x$, respectively
(see the top of Figure~\ref{fig06} for an illustration).
If a satellite $y$ of the Type-$2$ optimal star is an $a$-vertex associated with a $c$-vertex $v$ in the $k_{v\mbox{-}u}$-star,
then depending on the size of the internal $k^+_x$-star,
either {\sc Pull-by-$(k, (k+1)^+)$} is applicable on the internal $k_{v\mbox{-}u}$-star and $(k+1)^+_x$-star to replace the $k_{v\mbox{-}u}$-star and $x$
	with a $k$-star centered at $u$ and a $k^+$-star centered at $x$,
or {\sc Pull-by-$(k, k)$} is applicable on the internal $k_{v\mbox{-}u}$-star and $k_x$-star to replace them with a $k$-star centered at $u$ and a $(k+1)^+$-star centered at $x$
(see the bottom of Figure~\ref{fig06} for an illustration).
Therefore, in this case the Type-$2$ optimal star contains no $a$-vertex either.
\end{proof}

Recall that a Type-$1$ optimal star contains exactly two uncovered vertices by $\mcP$.
If it contains only one $c$-vertex, i.e., it is a $k$-star, then it is called a Type-$11$ optimal star;
the other Type-$1$ optimal stars each containing at least two $c$-vertices are called Type-$12$ optimal stars (which are $(k+1)^+$-stars).
Also recall that a Type-$3$ optimal star contains no uncovered vertex by $\mcP$.
If it contains at most two $a$-vertices, then it is called a Type-$31$ optimal star;
the other Type-$3$ optimal stars each containing at least three $a$-vertices are called Type-$32$ optimal stars.
Correspondingly, let $Opt_{ij}$ ($Apx_{ij}$, respectively) denote the total number of vertices (covered vertices by $\mcP$, respectively) in all the Type-$ij$ optimal stars,
for $i = 1, 3$ and $j = 1, 2$.
It follows that, besides Eqs.~(\ref{eq01},\ref{eq02}),
\begin{eqnarray}
&~& Opt_i = Opt_{i1} + Opt_{i2}, \mbox{ } Apx_i = Apx_{i1} + Apx_{i2}, \mbox{ for } i = 1, 3;\label{eq03}\\
&~& Opt_{11} = 3 Apx_{11}, \mbox{ } Opt_{12} \le 2 Apx_{12}, \mbox{ and } Opt_{3j} = Apx_{3j}, \mbox{ for } j = 1, 2.\label{eq04}
\end{eqnarray}

\begin{lemma}
\label{lemma07}
Each Type-$32$ optimal star contains no $a$-vertex associated with a $c$-vertex in any Type-$11$ optimal star.
\end{lemma}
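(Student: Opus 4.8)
The plan is to argue by contradiction against the termination of {\sc LocalSearch-$k^+$} (with $k = 2$): assuming some Type-$32$ optimal star $O$ contains an $a$-vertex $a$ that is associated with a $c$-vertex $v$ lying in a Type-$11$ optimal star $O'$, I would exhibit an applicable {\sc Pull} operation, contradicting that the algorithm has stopped. First I would unpack the local picture around $v$. Since $O'$ is a Type-$11$ star, it is a $2$-star whose center $w'$ and one satellite $z'$ are both uncovered, with $v$ its only covered (hence $c$-)satellite; by Lemma~\ref{lemma04} $v$ is a satellite of a distinct internal $2$-star $I_v$, which is in fact a $k_{v\mbox{-}u}$-star with $u = w'$, so $v$ is adjacent to $w'$ and $\{w', z'\}$ is an outside $1$-star. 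The two $a$-vertices associated with $v$ are the remaining two vertices of $I_v$, one of which is the vertex $a \in O$.

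The engine of the argument is a coverage-neutral \emph{lateral move}: deleting $I_v$ and re-collecting the $2$-star $\{w'; v, z'\}$ re-covers the two previously uncovered vertices $w'$ and $z'$ while releasing both $a$-vertices of $v$. On its own this move neither gains nor loses covered vertices, so a strict gain must come from simultaneously re-covering at least one released vertex. Here the Type-$32$ hypothesis enters: $O$ carries at least three $a$-vertices, and since every internal $2$-star supplies at most two $a$-vertices (each such $2$-star hosts a unique $c$-vertex by Lemma~\ref{lemma04}), these three meet at least two, and generically three, distinct internal $2$-stars. I would select three internal $2$-stars, one of them $I_v$, feed them to {\sc Pull-by-$(k, k, k)$}, and reassemble the released vertices into the required two $2$-stars plus one $(k+1)^+$-star: the $2$-star $\{w'; v, z'\}$ absorbs the Type-$11$ gain, while the adjacencies inside the star $O$ (its center is adjacent to all its satellites) let the freed $a$-vertices of $O$ be regrouped around a common center into a $(k+1)^+$-star, with the leftover vertices forming the second $2$-star. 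The Type-$11$ status of $O'$ is essential here, because it makes $w'$ a \emph{private} center, not shared with any other $c$-vertex as a Type-$12$ star's center would be, so that $\{w'; v, z'\}$ can be formed without colliding with the reassembly of the other selected stars. The net effect covers $w'$ and $z'$ at no other cost, a strict increase of at least one, contradicting termination.

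The main obstacle is the bookkeeping of this reassembly, which forks into several cases that must all be closed. I would organize them by (i) whether the three $a$-vertices of $O$ occupy three distinct internal $2$-stars or only two (one star contributing both of its $a$-vertices); (ii) the role, center versus satellite, that each relevant vertex plays both in $O$ and in its internal $2$-star; and (iii) whether the center of $O$ is itself released by the chosen deletions, since that determines which vertex can serve as the hub of the extracted $(k+1)^+$-star. In the degenerate cases where only two internal $2$-stars are available, or where the center of $O$ is not freed, I expect {\sc Pull-by-$(k, k, k)$} to be replaced by one of the two-star operations {\sc Pull-by-$(k, (k+1)^+)$} or {\sc Pull-by-$(k, k)$}, exactly in the spirit of the proofs of Lemma~\ref{lemma05} and Lemma~\ref{lemma06}. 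Throughout, correctness of each extraction must be checked against the constraints guaranteed at termination, namely that by Lemma~\ref{lemma02} a satellite of an internal star has at most one uncovered neighbor and by Lemma~\ref{lemma01} no uncovered vertex is adjacent to the center of an internal star, and against vertex-disjointness of the newly formed stars. Verifying that some admissible operation survives in every case, rather than constructing a single clever operation, is where the real work lies.
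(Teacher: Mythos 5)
Your high-level strategy is the paper's: argue by contradiction, exhibit an applicable {\sc Pull} operation with {\sc Pull-by-$(k,k,k)$} as the engine, and use the fact that a Type-$11$ star has a \emph{private} uncovered center so that the two outside-star centers $u$ and $x$ attached to two distinct $c$-vertices cannot collide --- that collision ($u=x$) is precisely the one escape the paper is left with, and it forces the $c$-vertices into a common Type-$12$ star. So the role you assign to the Type-$11$ hypothesis is exactly right.

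The genuine gap is in the construction of the triple fed to {\sc Pull-by-$(k,k,k)$}. You propose to remove three internal $2$-stars \emph{each containing an $a$-vertex of $O$} and to regroup ``the freed $a$-vertices of $O$ around a common center into a $(k+1)^+$-star.'' The only guaranteed common neighbor of those freed satellites of $O$ is the center $w$ of $O$ itself, and under your choice of stars $w$ is not freed, so the $(k+1)^+$-star has no hub; each removed star then only buys back a $2$-star through its own $c$-vertex's outside edge, which is coverage-neutral (three $2$-stars is not even a legal output of {\sc Pull-by-$(k,k,k)$}). This is the \emph{main} case, not a degenerate one to be deferred. The paper's triple is different: the $k_{v\mbox{-}u}$-star and $k_{y\mbox{-}x}$-star of two $a$-vertices of $O$ associated with \emph{different} $c$-vertices, plus the internal $2$-star containing $w$ \emph{as a satellite}; the output is $\{u;v,u'\}$, $\{x;y,x'\}$ and the $3$-star on $w$, its two freed $O$-satellites, and the freed center of $w$'s own former star. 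To license this triple the paper first pins down where $w$ lives: it shows $w$ is not an $a$-vertex (else {\sc Pull-by-$(k,k)$} applies), not a satellite of a $(k+1)^+$-star (else {\sc Pull-by-$(k,(k+1)^+)$}), and not the center of an internal $2$-star (else {\sc Pull-by-$k$}), leaving ``satellite of an internal $2$-star'' as the only possibility. These preliminary eliminations, and the resulting choice of $w$'s star as the third member of the triple, are the missing idea; without them your reassembly does not produce an applicable operation. Your claim that ``the leftover vertices form the second $2$-star'' is likewise unsupported --- in the paper both $2$-stars come canonically from the outside $1$-stars at $u$ and $x$, not from leftovers.
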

\begin{proof}
Recall that a Type-$32$ optimal star contains at least three $a$-vertices.

We first claim that its center $w$ cannot be an $a$-vertex.
Suppose to the contrary that $w$ and two satellites $z$ and $r$ in the Type-$32$ optimal star are $a$-vertices.
Since by Lemma~\ref{lemma04} a $c$-vertex is associated with exactly $k = 2$ $a$-vertices,
at least one of $z$ and $r$, say $z$, is not associated with the same $c$-vertex that $w$ is associated with.
Assume $z$ is associated with the $c$-vertex $y$ in an internal $k_{y\mbox{-}x}$-star and $w$ is associated with the $c$-vertex $v$ in an internal $k_{v\mbox{-}u}$-star.
Then {\sc Pull-by-$(k, k)$} is applicable on these two $k_{y\mbox{-}x}$-star and $k_{v\mbox{-}u}$-star to replace them
by three $k$-stars centered at $u, x$ and $z$ (or $w$), respectively, when $u \ne x$ (see Figure~\ref{fig07} for an illustration),
or by a $(k+1)$-star centered at $u$ and a $k$-star centered at $z$ (or $w$) when $u = x$.

\begin{figure}[ht]
\centering\scalebox{1.0}{
  \setlength{\unitlength}{1bp}%
  \begin{picture}(159.15, 61.41)(0,0)
  \put(0,0){\includegraphics{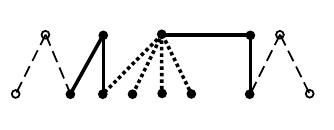}}
  \put(136.47,44.52){\fontsize{11.38}{13.66}\selectfont $u$}
  \put(78.89,46.85){\fontsize{11.38}{13.66}\selectfont $w$}
  \put(118.93,8.12){\fontsize{11.38}{13.66}\selectfont $v$}
  \put(76.56,8.34){\fontsize{11.38}{13.66}\selectfont $r$}
  \put(48.32,8.23){\fontsize{11.38}{13.66}\selectfont $z$}
  \put(23.93,44.52){\fontsize{11.38}{13.66}\selectfont $x$}
  \put(32.65,8.56){\fontsize{11.38}{13.66}\selectfont $y$}
  \end{picture}%
}
\captionsetup{width=.95\textwidth}
\caption{An illustration to apply operation {\sc Pull-by-$(k, k)$} when the center vertex of the Type-$32$ optimal star is an $a$-vertex.\label{fig07}}
\end{figure}

\begin{figure}[ht]
\centering\scalebox{1.0}{
  \setlength{\unitlength}{1bp}%
  \begin{picture}(159.15, 75.63)(0,0)
  \put(0,0){\includegraphics{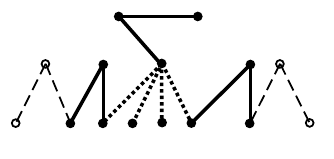}}
  \put(136.47,44.52){\fontsize{11.38}{13.66}\selectfont $u$}
  \put(78.89,46.85){\fontsize{11.38}{13.66}\selectfont $w$}
  \put(118.93,8.12){\fontsize{11.38}{13.66}\selectfont $v$}
  \put(92.01,8.34){\fontsize{11.38}{13.66}\selectfont $r$}
  \put(48.32,8.23){\fontsize{11.38}{13.66}\selectfont $z$}
  \put(23.93,44.52){\fontsize{11.38}{13.66}\selectfont $x$}
  \put(32.65,8.56){\fontsize{11.38}{13.66}\selectfont $y$}
  \end{picture}%
}
\captionsetup{width=.95\textwidth}
\caption{An illustration to apply operation {\sc Pull-by-$(k, k, k)$} when the center vertex of the Type-$32$ optimal star is a satellite in an internal $k$-star.\label{fig08}}
\end{figure}

Hence at least three satellites in the Type-$32$ optimal star are $a$-vertices, and we assume that $z$ and $r$ are associated with two different $c$-vertices
$y$ in an internal $k_{y\mbox{-}x}$-star and $v$ in an internal $k_{v\mbox{-}u}$-star, respectively.
When $w$ is in an internal $(k+1)^+$-star,
{\sc Pull-by-$(k, (k+1)^+)$} is applicable on these two $(k+1)^+$-star and $k_{v\mbox{-}u}$-star to strictly increase the number of covered vertices;
when $w$ is the center of an internal $k$-star,
{\sc Pull-by-$k$} is applicable on the $k_{v\mbox{-}u}$-star to strictly increase the number of covered vertices.
Therefore, $w$ is a satellite in an internal $k$-star.
If $u \ne x$, then {\sc Pull-by-$(k, k, k)$} is applicable on the three $k_{v\mbox{-}u}$-star, $k_{y\mbox{-}x}$-star and the $k$-star containing $w$ to replace them with
two $k$-stars centered at $u$ and $x$, respectively, and a $(k+1)^+$-star centered at $w$ (see for an illustration in Figure~\ref{fig08}).
In the remaining case where $u$ and $x$ collide,
Lemma~\ref{lemma04} states that the corresponding $c$-vertices $v$ and $y$ are in the same Type-$1$ optimal star.
Since $v$ and $y$ are distinct, this Type-$1$ optimal star is of Type-$12$.
This proves the lemma.
\end{proof}

\begin{theorem}
\label{thm02}
The algorithm {\sc LocalSearch-$2^+$} is an $O(|V|^4 |E|)$-time $\frac 32$-approxim\-ation for the $2^+$-star packing problem.
\end{theorem}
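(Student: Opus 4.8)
The plan is to follow the template of the proof of Theorem~\ref{thm01}, treating the running time and the approximation ratio separately, with the ratio being the substantive part. For the running time, I would note that {\sc LocalSearch-$2^+$} differs from the general algorithm only by the extra operation {\sc Pull-by-$(k,k,k)$}. Since there are $O(|V|)$ internal stars at any moment, this operation enumerates $O(|V|^3)$ triples of internal $k$-stars; for each triple it returns $O(1)$ vertices to the remainder graph and performs a constant number of {\sc Collect} operations, each costing $O(|E|)$, so a single {\sc Pull-by-$(k,k,k)$} runs in $O(|V|^3|E|)$ time, dominating the $O(|V|^2|E|)$ cost of the operations analyzed in Theorem~\ref{thm01}. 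As each successful operation still strictly increases $|V(\mcP)|$, there are $O(|V|)$ iterations, giving the claimed $O(|V|^4|E|)$ bound.

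For the ratio I would first reduce everything to a single inequality. Combining Eqs.~(\ref{eq01})--(\ref{eq04}) with $Opt_2 \le \tfrac 32 Apx_2$ from Eq.~(\ref{eq02}) (recall $k=2$) yields
\[
Opt \le 3 Apx_{11} + 2 Apx_{12} + \tfrac 32 Apx_2 + Apx_{31} + Apx_{32},
\]
whence $\tfrac 32 Apx - Opt \ge \tfrac 12\big(Apx_{31} + Apx_{32} + 3 Apx_4 - 3 Apx_{11} - Apx_{12}\big)$. So it suffices to prove the key inequality
\[
3 Apx_{11} + Apx_{12} \le Apx_{31} + Apx_{32} + 3 Apx_4. \qquad (\star)
\]

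To establish $(\star)$ I would run a fractional charging argument on the $a$-vertices. First I would record where $a$-vertices can live: an $a$-vertex can never coincide with a $c$-vertex (otherwise two $c$-vertices would map to the same internal $k$-star, contradicting Lemma~\ref{lemma04}), so no $a$-vertex lies in a Type-$1$ optimal star; Lemma~\ref{lemma06} forbids $a$-vertices in Type-$2$ optimal stars; and Lemma~\ref{lemma07} forbids Type-$11$ $a$-vertices in Type-$32$ optimal stars. Hence every Type-$11$ $a$-vertex lies in a Type-$31$ star or outside, while every Type-$12$ $a$-vertex lies in a Type-$31$ star, a Type-$32$ star, or outside. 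I would then distribute a budget to the $a$-vertices: $\tfrac 32$ to each in a Type-$31$ star, $1$ to each in a Type-$32$ star, and $3$ to each outside. Two checks close the argument: (i) the budget assigned within any single optimal star or outside never exceeds its covered-vertex count — a Type-$31$ star holds at most two $a$-vertices but at least three vertices ($\tfrac 32\cdot 2 = 3$), a Type-$32$ star holds at most one $a$-vertex per vertex, and the outside has $Apx_4$ vertices each worth $3$ — so the total assigned is at most $Apx_{31} + Apx_{32} + 3 Apx_4$; and (ii) each Type-$11$ $c$-vertex, whose two $a$-vertices each contribute at least $\tfrac 32$, collects at least $3$, while each Type-$12$ $c$-vertex collects at least $2 \ge 1$, so the total collected is at least $3 Apx_{11} + Apx_{12}$, with no double counting since by Lemma~\ref{lemma04} distinct $c$-vertices own disjoint $a$-vertices. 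Chaining (ii) $\le$ (total assigned) $\le$ (i) gives $(\star)$, and plugging back yields $Opt \le \tfrac 32 Apx$.

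The main obstacle is getting the budget bookkeeping behind $(\star)$ exactly right, i.e., choosing per-location weights that simultaneously never overspend the covered vertices at any location yet still supply each Type-$11$ $c$-vertex with the full $3$ units it requires. This is precisely where the refined type distinctions pay off: the ``at most two $a$-vertices in a Type-$31$ star'' definition furnishes the $\tfrac 32$ headroom, and Lemma~\ref{lemma07} guarantees that Type-$32$ stars carry no expensive Type-$11$ $a$-vertex. Verifying these structural constraints, rather than any heavy computation, is the crux of the proof.
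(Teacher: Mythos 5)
Your proposal is correct and follows essentially the same route as the paper: both rest on Lemma~\ref{lemma04} (each $c$-vertex owns two $a$-vertices in a distinct internal $k$-star), Lemma~\ref{lemma06}, Lemma~\ref{lemma07}, and the Type-$11$/$12$/$31$/$32$ split, and both are counting arguments over where the $a$-vertices can reside. The only difference is presentational: you isolate the single inequality $3\,Apx_{11} + Apx_{12} \le Apx_{31} + Apx_{32} + 3\,Apx_4$ and verify it by a weighted charging scheme, whereas the paper carries out the same accounting directly with auxiliary quantities $p_1, p_2, q_1, q_2$ in a chain of algebraic estimates.
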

\begin{proof}
Note that we have an additional operation {\sc Pull-by-$(k, k, k)$} that works on three internal $k$-stars, 
and thus requires trying three $k$-stars,
returning a maximum number $3(k+1)$ of vertices back to the remainder graph,
then extracting a maximum number $3$ of $k^+$-stars, and thus is done in $O(|V|^3 |E|)$ time.
That is, the overall time complexity increases by an order, which is $O(|V|^4 |E|)$.

Since each Type-$31$ optimal star contains at most two $a$-vertices, at most $\frac 23$ of vertices of Type-$31$ optimal stars are $a$-vertices.
Among these $a$-vertices, assume $p_j$ of them are associated with the $c$-vertices in the Type-$1j$ optimal stars, for $j = 1, 2$.
That is,
\[
p_1 + p_2 \le \frac 23 Opt_{31}.
\]
Among those $Apx_4$ covered vertices by $\mcP$ but falling outside of the optimal solution $\mcQ^*$,
assume there are $q_j$ $a$-vertices which are associated with the $c$-vertices in the Type-$1j$ optimal stars, for $j = 1, 2$.
That is,
\[
q_1 + q_2 \le Apx_4.
\]
Using Lemma~\ref{lemma06} and~\ref{lemma07}, the total number of $a$-vertices associated with those $c$-vertices in the Type-$1j$ optimal stars, for $j = 1, 2$, is
\[
2 Apx_{11} \le p_1 + q_1, \mbox{ and } 2 Apx_{12} \le p_2 + q_2 + Opt_{32} \le \frac 23 Opt_{31} - p_1 + q_2 + Opt_{32},
\]
respectively.
Combining the above with Eqs.~(\ref{eq01}--\ref{eq04}), we obtain
\[
\begin{array}{rcl}
Opt &=& Opt_{11} + Opt_{12} + Opt_2 + Opt_{3} \le 3 Apx_{11} + 2 Apx_{12} + \frac 32 Apx_2 + Apx_3\\
   &=& \frac 32 Apx_{11} + \frac 12 Apx_{12} + \frac 32 (Apx_1 + Apx_2) + Apx_3\\
   &\le & \frac 34 (p_1 + q_1) + \frac 14 (\frac 23 Opt_{31} - p_1 + q_2 + Opt_{32}) + \frac 32 (Apx_1 + Apx_2) + Apx_3\\
   &=& \frac 12 p_1 + \frac 34 q_1 + \frac 14 q_2 + \frac 16 Opt_{31} + \frac 14 Opt_{32} + \frac 32 (Apx_1 + Apx_2) + Apx_3\\
   &\le & \frac 13 Opt_{31} + \frac 34 Apx_4 + \frac 16 Opt_{31} + \frac 14 Opt_{32} + \frac 32 (Apx_1 + Apx_2) + Apx_3\\
   &\le & \frac 12 Opt_3 + \frac 34 Apx_4 + \frac 32 (Apx_1 + Apx_2) + Apx_3\\
   &\le & \frac 32 (Apx_1 + Apx_2 + Apx_3 + Apx_4)
	= \frac 32 Apx.
\end{array}
\]
This finishes the proof.
\end{proof}

\section{Approximating the $k^-/t$-star packing problem}
Unlike the NP-hard $k^+$-star packing problem we consider in the last section,
the star packing problem using vertex-disjoint $k^-$-stars, i.e., stars with $1$ up to $k$ satellites, to cover the maximum number of vertices,
is the {\em sequential} $k^-$-star packing problem and solvable in $O(\sqrt{|V|} |E|)$ time for any $k \ge 1$~\cite{HK86,BG11}.

Let $\mcQ_0$ denote an optimal $k^-$-star packing for an input graph $G = (V, E)$, and let $V_0 = V \setminus V(\mcQ_0)$ denote the set of uncovered vertices by $\mcQ_0$.
One sees that each vertex $v \in V_0$ can be adjacent to only the centers of the internal $k$-stars in $\mcQ_0$,
since otherwise $v$ would be covered to achieve a better packing.
These internal $k$-stars in $\mcQ_0$ of which the center is adjacent to a vertex in $V_0$ are called {\em critical}.
For the same reason, each satellite of a critical $k$-star can be adjacent to only the centers of the internal $k$-stars in $\mcQ_0$,
and recursively, those internal $k$-stars in $\mcQ_0$ of which the center is adjacent to a satellite of critical $k$-star become {\em critical} too.
We denote by $\mcK(v) \subseteq \mcQ_0$ the collection of all critical $k$-stars associated with the vertex $v \in V_0$,
and $\mcK(V_0) = \cup_{v \in V_0} \mcK(v)$.

Note that $v \in V_0$ and all the satellites covered by $\mcK(v)$ together form an independent set in the input graph $G$,
and their neighbors are the centers of the critical $k$-stars in $\mcK(v)$.

Given $k > t \ge 1$, in the $k^-/t$-star packing problem one uses vertex-disjoint $k^-$-stars except $t$-stars to cover the maximum number of vertices.
The problem is NP-hard for any $k > t \ge 1$~\cite{HK86},
and admits a straightforward $(1 + \frac 1t)$-approximation algorithm, for $t \ge 2$,
by first computing the $\mcQ_0$ and then removing one satellite from each $t$-star in $\mcQ_0$~\cite{LL21}.
Below we continue to use the idea in this simple approximation, but we seek for a better $\mcQ_0$ in which there are relatively fewer $t$-stars than the other stars.
For instance, in the extreme case where $\mcQ_0$ contains no $t$-star, it is an optimal solution to the $k^-/t$-star packing problem.

So we consider $k > t \ge 2$,
fix an optimal $k^-/t$-star packing $\mcQ^*$ for discussion and let $V^*_0 = V \setminus V(\mcQ^*)$ denote the set of uncovered vertices by $\mcQ^*$.
We observe the optimal packing in the next lemma.

\begin{lemma}
\label{lemma08}
There exists an optimal sequential $k^-$-star packing $\mcQ_0$ for the input graph in which the uncovered vertex set $V_0 \subseteq V^*_0$.
\end{lemma}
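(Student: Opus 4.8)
The plan is to prove the equivalent statement $V(\mcQ^*) \subseteq V(\mcQ_0)$ for a suitably chosen optimal sequential $k^-$-star packing $\mcQ_0$. First I note that $\mcQ^*$ is itself a (possibly non-optimal) sequential $k^-$-star packing, so every optimal $k^-$-star packing covers at least $|V(\mcQ^*)|$ vertices. Among all optimal sequential $k^-$-star packings I would then pick the one $\mcQ_0$ that maximizes the overlap $|V(\mcQ_0) \cap V(\mcQ^*)|$, and argue by contradiction: suppose $V_0 \cap V(\mcQ^*) \ne \emptyset$ and fix a vertex $w$ in this intersection. The goal is to rearrange $\mcQ_0$ into another optimal packing whose overlap with $V(\mcQ^*)$ is strictly larger, contradicting the choice of $\mcQ_0$.

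The key tool is the recursive criticality structure introduced just before the lemma. Taking $v = w \in V_0$, let $C$ be the set of centers of the critical $k$-stars in $\mcK(w)$ and let $A = \{w\} \cup \{\text{satellites covered by } \mcK(w)\}$. As already observed, $A$ is independent in $G$, its neighborhood is exactly $C$, and since each critical star is a $k$-star we have $|A| = k|C| + 1$; moreover in $\mcQ_0$ precisely the vertices $A \setminus \{w\}$ (the satellites) are covered while $w$ is the unique uncovered vertex of $A$. I would next prove a counting bound: because $A$ is independent with $N(A) = C$, every star of an arbitrary $k^-$-star packing covers at most one $A$-vertex if it is centered in $A$ (using $\ge 1$ vertex of $C$ as a leaf) and at most $k$ $A$-vertices if it is centered in $C$ (using exactly its $C$-center), and these $C$-vertices are pairwise distinct. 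Hence any $k^-$-star packing covers at most $k|C|$ vertices of $A$. Applying this to $\mcQ^*$, which covers $w$, forces $\mcQ^*$ to leave some $w' \in A \setminus \{w\}$ uncovered; that is, $w' \in V(\mcQ_0) \cap V^*_0$.

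It then remains to convert this into an actual improving rearrangement. Since $w'$ is a satellite of a critical star reachable from $w$ in the recursive definition of $\mcK(w)$, there is a chain of distinct critical stars with centers $c_{i_1}, \ldots, c_{i_m}$ such that $c_{i_1}$ is adjacent to $w$, each $c_{i_{j+1}}$ is adjacent to a satellite of the star centered at $c_{i_j}$, and $w'$ is a satellite of the last star. I would perform an alternating swap along this chain: attach $w$ to $c_{i_1}$ while detaching the satellite adjacent to $c_{i_2}$, propagate the resulting ``hole'' forward, and finally detach exactly $w'$. Each touched star remains a valid $k$-star, the total number of covered vertices is unchanged, and the net effect on coverage is to add $w$ and remove $w'$. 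This produces an optimal $k^-$-star packing $\mcQ_0'$ with
\[
|V(\mcQ_0') \cap V(\mcQ^*)| = |V(\mcQ_0) \cap V(\mcQ^*)| + 1,
\]
since $w \in V(\mcQ^*)$ and $w' \notin V(\mcQ^*)$, contradicting the maximality of $\mcQ_0$. Therefore $V_0 \cap V(\mcQ^*) = \emptyset$, i.e.\ $V_0 \subseteq V^*_0$.

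The hard part will be the two structural pieces rather than the outer framework: rigorously justifying the counting bound that an independent set $A$ with $N(A)=C$ and $|A|=k|C|+1$ cannot be fully covered, and verifying that the recursive criticality really supplies a \emph{simple} alternating chain from $w$ to the prescribed vertex $w'$ so that the swap yields a legitimate $k^-$-star packing. Both hinge on using the recursive definition of $\mcK(w)$ exactly as a reachability/alternating-path structure, which is the crux of the argument.
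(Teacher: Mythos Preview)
Your proof is correct and rests on the same core mechanism as the paper's: an alternating-chain swap inside the criticality structure $\mcK(w)$ that trades an uncovered vertex $w\in V_0\cap V(\mcQ^*)$ for some satellite $w'\in V(\mcQ_0)\cap V^*_0$. The packaging, however, differs in two respects worth noting. First, you frame the argument extremally (pick $\mcQ_0$ maximizing $|V(\mcQ_0)\cap V(\mcQ^*)|$), so a single swap already yields the contradiction; the paper instead iterates, swapping one vertex at a time until $V_0\subseteq V^*_0$. Second, and more substantively, you isolate a clean counting lemma---an independent set $A$ with $N(A)=C$ and $|A|=k|C|+1$ cannot be fully covered by any $k^-$-star packing---to guarantee the existence of $w'$ \emph{before} building the chain, and then extract the chain purely from the recursive reachability defining $\mcK(w)$. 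The paper interleaves these two steps: it walks along edges of $\mcQ^*$ (taking $c_i$ to be the $\mcQ^*$-neighbor of $v_{i-1}$) and uses a running pigeonhole to discover the target vertex $u$ along the way. Your separation is arguably more modular and makes the role of the criticality hypothesis $N(A)=C$ explicit; the paper's version is shorter since the chain and the target are found in one pass. Both arguments are sound, and the ``hard parts'' you flag (the counting bound and the existence of a simple chain) go through exactly as you outlined.
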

\begin{proof}
The proof is constructive, in that we start with an optimal sequential $k^-$-star packing $\mcQ_0$ for the input graph,
and if there is a vertex $v \in V_0 \setminus V^*_0$ then we swap $v$ with a vertex $u \in V^*_0 \setminus V_0$ uncovered by $\mcQ_0$.
The stars in $\mcQ^*$ are referred to as {\em optimal stars}, and assume $v$ is in the optimal star $S^*$.

\begin{figure}[ht]
\centering\scalebox{1.0}{
  \setlength{\unitlength}{1bp}%
  \begin{picture}(206.80, 62.25)(0,0)
  \put(0,0){\includegraphics{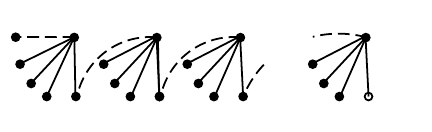}}
  \put(7.63,47.70){\fontsize{11.38}{13.66}\selectfont $v$}
  \put(34.83,8.12){\fontsize{11.38}{13.66}\selectfont $v_1$}
  \put(34.42,47.37){\fontsize{11.38}{13.66}\selectfont $c_1$}
  \put(62.29,47.48){\fontsize{11.38}{13.66}\selectfont $c_2$}
  \put(74.97,8.12){\fontsize{11.38}{13.66}\selectfont $v_2$}
  \put(102.42,47.48){\fontsize{11.38}{13.66}\selectfont $c_3$}
  \put(175.30,8.12){\fontsize{11.38}{13.66}\selectfont $u = v_i$}
  \put(162.62,47.48){\fontsize{11.38}{13.66}\selectfont $c_i$}
  \put(115.10,8.12){\fontsize{11.38}{13.66}\selectfont $v_3$}
  \put(124.16,35.78){\fontsize{11.38}{13.66}\selectfont $\ldots$}
  \end{picture}%
}
\captionsetup{width=.95\textwidth}
\caption{An illustration to swap $v \in V_0 \setminus V^*_0$ with a vertex $u \in V^*_0 \setminus V_0$ uncovered by $\mcQ_0$.
	Here $k = 4$, the solid edges are in $\mcQ_0$ and the dashed edges are in $\mcQ^*$.\label{fig09}}
\end{figure}

Note from the above argument that all the neighbors of $v$ are the centers of some critical $k$-stars in $\mcK(v)$,
and one neighbor in $\mcQ^*$ is denoted as $c_1$ and assume $c_1$ is the center of a critical $k$-star $S_1 \in \mcK(v)$
(see for an illustration in Figure~\ref{fig09}).
It follows that there exists a satellite of $S_1$, denoted as $v_1$, which is not covered by the optimal star $S^*$.
If $v_1$ is not covered by any other optimal star, that is, $v_1 \in V^*_0 \setminus V_0$, then we have found $u = v_1$.
Otherwise, $v_1$ is in an optimal star $S^*_1$ other than $S^*$.
Next, all the neighbors of $v_1$ are the centers of some critical $k$-stars in $\mcK(v)$,
and one neighbor in $\mcQ^*$ is denoted as $c_2$ and assume $c_2$ is the center of the critical $k$-star $S_2 \in \mcK(v)$
(see for an illustration in Figure~\ref{fig09}).
Then there exists a satellite $v_2$ of $S_1$ or $S_2$ which is not covered by the optimal stars $S^*$ and $S^*_1$.
Again, if $v_2$ is not covered by any other optimal star, then we have found $u = v_2$.
Otherwise, $v_2$ is in an optimal star $S^*_2$ other than $S^*$ and $S^*_1$.
This process is repeated, and by the finiteness of $\mcQ^*$ the process terminates with the vertex $u$ found.
Afterwards, we swap $v$ with $u$ and update the involved critical $k$-stars correspondingly, that is, $v_1$ is replaced by $v$,
$v_2$ is replaced by $v_1$, etc.
\end{proof}

The above constructive proof of Lemma~\ref{lemma08} tells that,
given an optimal $k^-/t$-star packing $\mcQ^*$ and an optimal sequential $k^-$-star packing $\mcQ_0$,
how to modify those critical $k$-stars in $\mcK(V_0)$ into one such that $V_0 \subseteq V^*_0$.
Note that $\mcQ_0$ can be computed in polynomial time and below we assume without loss of generality that for the optimal $k^-/t$-star packing $\mcQ^*$, $V_0 \subseteq V^*_0$.

In our algorithm {\sc LocalSearch-$k^-/t$} for the $k^-/t$-star packing problem (see for a high-level description in Figure~\ref{fig10}), where $k > t \ge 2$,
we propose local search operations to work with those stars in $\mcQ_0 \setminus \mcK(V_0)$ to reduce the number of $t$-stars,
or to keep the number of $t$-stars while increasing the number of other stars.
This way, only $O(|V|)$ iterations are done and at the end one satellite of each remaining $t$-star in $\mcQ_0$ is discarded to become a solution $\mcP$
for the $k^-/t$-star packing problem.

\subsection{The algorithm}
The algorithm works with $\mcQ_0$, which maintains to be an optimal $k^-$-star packing and is referred to as the {\em current} $k^-$-star packing.
Below we describe local search operations at the presence of at least a $t$-star inside $\mcQ_0$, as otherwise $\mcQ_0$ is an optimal $k^-/t$-star packing.
These operations are collectively called {\sc Revise}, but they separately work on different combinations of stars in $\mcQ_0$ of which at least one is a $t$-star.
An operation is {\em applicable} means it reduces the number of $t$-stars while keeping the same vertices covered,
or it keeps the number of $t$-stars while increasing the number of other stars.

\begin{definition}
\label{def07}
{\em (Operation {\sc Revise-$t$})}
Given a $t$-star, the operation revises it into two stars, of which one is a $(t-2)$-star and the other is a $1$-star.
\end{definition}

Note that an operation {\sc Revise-$t$} applies to a $t$-star if two satellites of the star are adjacent in the input graph $G$
(and there is no other possibility since $\mcQ_0$ is an optimal $k^-$-star packing),
and thus the operation takes $O(1)$ time.
Specially, when $t = 2$, no {\sc Revise-$t$} operation is applicable,
since we do not want to form the two satellites into a $1$-star while leaving out the center uncovered.

\begin{definition}
\label{def08}
{\em (Operation {\sc Revise-$(t, i)$})}
Given a $t$-star and an $i$-star, the operation revises them into two $k^-$-stars of which none is a $t$-star, or into three $k^-$-stars of which at most one is a $t$-star.
\end{definition}

We remark that there are multiple scenarios for {\sc Revise-$(t, i)$} to be applicable,
for example, when a satellite of the $t$-star is adjacent to the center of the $i$-star and $i \ne t-1, k$,
or when two satellites of the $t$-star are both adjacent to the center of the $i$-star and $i = t-1$,
or when the center of the $t$-star is adjacent to a satellite of the $i$-star and $i \ne 1, t+1$.
Nevertheless, since at most $t + k + 2$ vertices are involved, the operation takes $O(1)$ time.

\begin{definition}
\label{def09}
{\em (Operation {\sc Revise-$(t, t, i)$})}
Given two $t$-stars and an $i$-star with $i \in \{1, t-1, t+1\}$, the operation revises them into two or three $k^-$-stars of which none is a $t$-star.
\end{definition}

Again, we remark that there are multiple scenarios for {\sc Revise-$(t, t, i)$} to be applicable,
for example, when the centers of the two $t$-stars are adjacent to different vertices of the $1$-star, respectively,
then the $1$-star can be split so that two $(t+1)$-stars are formed.
Since at most $3t + 4$ vertices are involved, the operation takes $O(1)$ time.

\begin{lemma}
\label{lemma09}
When none of the {\sc Revise} operation is applicable, the optimal sequential $k^-$-star packing $\mcQ_0$ satisfies the following:
\begin{description}
\parskip=0pt
\item[(1)]
	$V_0 = V \setminus V(\mcQ_0)$ and $\mcK(V_0)$ remain unchanged throughout the algorithm.
\item[(2)]
	Each satellite of every $t$-star is adjacent to only the centers of $(t-1)$-stars or $k$-stars.
\item[(3)]
	The center of every $t$-star is adjacent to only the satellites of $(t+1)$-stars or the centers of the other stars.
\end{description}
\end{lemma}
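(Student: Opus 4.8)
The plan is to treat all three parts through one structural lens, namely the recursive definition of $\mcK(V_0)$ and the independent-set property stated just after it. First I would record three preliminary facts. (i) Since $t<k$, every $t$-star is non-critical, so each of {\sc Revise-$t$}, {\sc Revise-$(t,i)$}, {\sc Revise-$(t,t,i)$} acts within $\mcQ_0\setminus\mcK(V_0)$. (ii) Every {\sc Revise} operation preserves the covered set $V(\mcQ_0)$: a direct vertex count shows that turning a $t$-star into a $(t-2)$-star plus a $1$-star, or resplitting a $t$-star with an $i$-star, keeps the total covered unchanged (when three stars are produced, one satellite is simply traded for one extra center). (iii) By the independent-set property, a vertex of $V_0$ together with all satellites of its critical $k$-stars is independent with neighbours only at centers of those critical $k$-stars; hence no vertex lying in a non-critical star can be adjacent to $V_0$ or to a satellite of a critical $k$-star, since such an adjacency would force that vertex to be the center of a critical $k$-star.

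For Part (1) I would argue by induction over the iterations, the base case being the initial optimal packing $\mcQ_0$. Because $\mcQ_0$ remains an optimal $k^-$-star packing and, by fact (ii), every operation preserves $V(\mcQ_0)$, the set $V_0=V\setminus V(\mcQ_0)$ is invariant. Each operation reshapes only stars of $\mcQ_0\setminus\mcK(V_0)$, so no critical star is ever modified or destroyed; and by fact (iii) the vertices available to the reshaped stars are non-adjacent both to $V_0$ and to every satellite of a critical star, so no reshaped star can acquire a center witnessing criticality. Hence no critical star is created either, and $\mcK(V_0)$ is invariant.

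For Parts (2) and (3) I would fix a $t$-star $S$ with center $c$ and a satellite $s$, and examine an arbitrary neighbour $w$ of $s$ (Part (2)) or of $c$ (Part (3)) lying outside $S$. By fact (iii), neither $c$ nor $s$ is adjacent to $V_0$ or to a satellite of a critical star, so if $w$ belongs to a critical star it is the center of a critical $k$-star, which is exactly the permitted $k$-star-center possibility (Part (2)) or the ``center of another star'' possibility (Part (3)). I may therefore assume $w$ lies in a non-critical $i$-star and argue by its role. If $w$ is a satellite, detaching $s$ and $w$ into a $1$-star (Part (2)), or detaching $w$ and extending $S$ into a $(t+1)$-star (Part (3)), makes {\sc Revise-$(t,i)$} applicable; recounting $t$-stars shows the operation always applies in Part (2), but in Part (3) is blocked only when $i=t+1$, which is the allowed case. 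If $w$ is a center, sliding $s$ onto $w$ makes {\sc Revise-$(t,i)$} applicable precisely when $i\notin\{t-1,k\}$ in Part (2) (the value $i=k$ being excluded because $w$ cannot absorb another satellite), while a center-to-center adjacency admits no improving reshaping and is allowed in Part (3). Since {\sc Revise-$t$} forbids two satellites of $S$ from being adjacent, assembling these cases yields both stated characterizations.

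I expect the main obstacle to be the bookkeeping around $1$-stars, which are undirected edges with no intrinsic center, so that ``satellite of a $1$-star'' and ``center of a $1$-star'' coincide, and for $t=2$ a $(t-1)$-star is itself a $1$-star. I would resolve this by always designating the endpoint adjacent to $s$ (or $c$) as the center, reconciling the $i=1$ subcase with the stated conclusion. The remaining care is to confirm, in each reshaping, that the produced stars are genuine $k^-$-stars (no empty center, at most $k$ satellites) and that the $t$-star count strictly drops or the number of stars strictly rises, so that the operation is truly applicable and the assumed non-applicability delivers the contradiction.
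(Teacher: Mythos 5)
Your overall strategy coincides with the paper's: part (1) follows because every \textsc{Revise} operation preserves the covered vertex set and cannot reshape a critical $k$-star (whose satellites, within the at most $3t+4$ or $t+k+2$ vertices an operation touches, are adjacent only to their own center), and parts (2)--(3) follow from a case analysis on the role of the neighbour of a $t$-star's satellite or center, invoking the appropriate \textsc{Revise} operation to rule out each forbidden adjacency. Your facts (i)--(iii) and your explicit remark that no reshaped star can become critical are consistent with, and slightly more detailed than, what the paper writes for statement (1).

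There is, however, one concrete sub-case where your plan as written would fail, and it sits exactly inside the $1$-star bookkeeping you flag as the main obstacle. Your proposed fix --- ``always designate the endpoint adjacent to $s$ (or $c$) as the center'' --- is not a consistent global choice: when $t=2$, if satellites of two \emph{different} $t$-stars are adjacent to the two \emph{different} endpoints of the same $1$-star, no orientation of that $1$-star satisfies statement (2) for both, and no two-star operation applies (detaching an endpoint of a $1$-star strands the other endpoint, which is also why your claim that the satellite--satellite detachment ``always applies in Part (2)'' is false for $i=1$; and forming the new $2$-star just trades one $t$-star for another when $t=2$). This contested configuration is precisely why the paper introduces the three-star operation \textsc{Revise-$(t,t,1)$} and invokes it here to produce a $(t+1)$-star and two $1$-stars, eliminating both $t$-stars; the analogous remark is needed for the $i=1$ subcase of part (3), where moving the $1$-star's satellite onto the $t$-star's center is likewise blocked. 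Once you add this use of \textsc{Revise-$(t,t,1)$}, your argument matches the paper's proof.
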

\begin{proof}
One sees that no operation touches any uncovered vertex or produces any new uncovered vertex, and thus $V_0$ remains unchanged.
A $k$-star of $\mcK(V_0)$ cannot be involved in an operation {\sc Revise-$(t, k)$},
since all its $k$ satellites are adjacent to its center only in the induced subgraph of $t + k + 2$ vertices and thus the $k$-star remains unchanged in the operation.
This proves statement (1).

To prove statement (2), firstly, two satellites of a $t$-star cannot be adjacent, as otherwise {\sc Revise-$t$} is applicable.
Secondly, if a satellite of the $t$-star is adjacent to a satellite of another $i$-star,
then when $i \ge 2$, {\sc Revise-$(t, i)$} can be applied to form a $(t-1)$-star, an $(i-1)$-star, and a $1$-star containing those two satellites;
when $i = 1$, {\sc Revise-$(t, i)$} can be applied to form a $(t-1)$-star and a $2$-star if $t \ge 3$,
or otherwise the roles of center and satellite in the $1$-star are swapped
(noting that if there is a satellite of another $t$-star adjacent to the center of the $1$-star,
then {\sc Revise-$(t, t, 1)$} can be applied to form a $(t+1)$-star and two $1$-stars).
Lastly, if a satellite of the $t$-star is adjacent to the center of another $i$-star with $i \ne t-1, k$,
then {\sc Revise-$(t, i)$} can be applied to form a $(t-1)$-star and an $(i+1)$-star.
This proves statement (2), and statement (3) can be similarly observed.
\end{proof}

Using all the above defined {\sc Revise} operations, our algorithm {\sc LocalSearch-$k^-/t$} applies one of them in each iteration;
when none of them is applicable, the algorithm removes one satellite from each $t$-star in the current packing $\mcQ_0$ and
returns it as the solution, denoted as $\mcP$, to the $k^-/t$-star packing problem.
A high-level description of the algorithm is depicted in Figure~\ref{fig10}.

\begin{figure}[h]
\begin{center}
\framebox{
\begin{minipage}{0.95\textwidth}
{\sc LocalSearch-$k^-/t$} for $k^-/t$-star packing when $k > t \ge 2$:\\
Input: A connected graph $G = (V, E)$;\\
Output: A collection $\mcP$ of vertex-disjoint $k^-/t$-stars in $G$.
\begin{enumerate}
\parskip=0pt
\item
	Compute an optimal $k^-$-star packing $\mcQ_0$, set $V_0 = V \setminus V(\mcQ_0)$ and $\mcK(V_0)$;
\item
	While (a {\sc Revise} operation is applicable) do
	\begin{enumerate}
	\parskip=0pt
	\item
		apply the operation to update $\mcQ_0$;
	\end{enumerate}
\item
	Remove one satellite from each $t$-star in $\mcQ_0$ to become the final packing $\mcP$.
\end{enumerate}
\end{minipage}}
\end{center}
\captionsetup{width=.99\textwidth} 
\caption{A high-level description of {\sc LocalSearch-$k^-/t$} for $k^-/t$-star packing.\label{fig10}}
\end{figure}

\subsection{Performance analysis}
We fix an optimal $k^-/t$-star packing $\mcQ^*$ for discussion,
and by Lemma~\ref{lemma08} we assume all the covered vertices by $\mcQ^*$ are covered by the optimal $k^-$-star packing $\mcQ_0$, i.e., $V(\mcQ^*) \subseteq V(\mcQ_0)$.
We examine how the vertices of $V(\mcQ_0)$ are covered by $\mcQ^*$ and $\mcP$, respectively,
where $\mcQ_0$ is the optimal $k^-$-star packing on which no {\sc Revise} operation is applicable.

To this purpose, we divide the $t$-stars in $\mcQ_0$ into three types (see for an illustration in Figure~\ref{fig11}):
\begin{itemize}
\parskip=0pt
\item
	In a Type-$1$ $t$-star, at least one vertex ($v_1$ in the illustration) is not covered by $\mcQ^*$.
\item
	In a Type-$2$ $t$-star, all its vertices are covered by $\mcQ^*$ but at least one edge ($(c_2, v_2)$ in the illustration) is not in $\mcQ^*$
	(i.e., one satellite of the $t$-star is connected to a vertex outside of the $t$-star in $\mcQ^*$).
\item
	In a Type-$3$ $t$-star, all its edges are in $\mcQ^*$ (i.e., the $t$-star is inside a $(t+1)^+$-star in $\mcQ^*$,
	or the center of the $t$-star is connected to at least one vertex, $v_3$ in the illustration, outside of the $t$-star in $\mcQ^*$).
\end{itemize}

\begin{figure}[ht]
\centering\scalebox{1.0}{
  \setlength{\unitlength}{1bp}%
  \begin{picture}(218.47, 61.59)(0,0)
  \put(0,0){\includegraphics{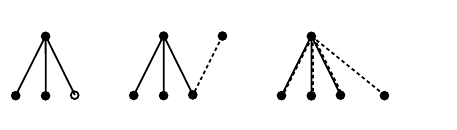}}
  \put(35.21,8.12){\fontsize{11.38}{13.66}\selectfont $v_1$}
  \put(20.65,46.92){\fontsize{11.38}{13.66}\selectfont $c_1$}
  \put(92.10,8.35){\fontsize{11.38}{13.66}\selectfont $v_2$}
  \put(77.32,47.03){\fontsize{11.38}{13.66}\selectfont $c_2$}
  \put(184.46,8.35){\fontsize{11.38}{13.66}\selectfont $v_3$}
  \put(148.24,46.92){\fontsize{11.38}{13.66}\selectfont $c_3$}
  \end{picture}%
}
\captionsetup{width=.95\textwidth}
\caption{An illustration of three types of $t$-stars in $\mcQ_0$, here $t = 3$ and the $t$-star centered at $c_i$ is Type-$i$,
	the solid edges are in $\mcQ_0$ and the dotted edges are in $\mcQ^*$,
	and the empty vertex is uncovered by $\mcQ^*$.\label{fig11}}
\end{figure}

Below we partition the covered vertices by $\mcQ_0$ into vertex-disjoint {\em regions} such that each region contains one or two $t$-stars.
Or equivalently, we {\em map} a $t$-star or a $t$-star pair injectively to a collection of stars or partial stars in $\mcQ_0$,
by which to count the number of {\em distinct} vertices each $t$-star in $\mcQ_0$ is {\em associated} with.
This way, we obtain the average number of vertices that are kept in $\mcP$ when deleting a satellite of a $t$-star in $\mcQ_0$.
The partitioning or mapping uses the termination conditions of the algorithm {\sc LocalSearch-$k^-/t$}.
For $0 < x < 1$ and a star $S$, we denote by $x S$ a {\em partial} star of $S$ which is a set containing exactly $x |V(S)|$ vertices of $S$.
Any two partial stars of a star are {\em vertex-disjoint}, i.e., they do not share any common fractional vertex.

\paragraph*{The mapping:}
For ease of presentation, we use the illustration in Figure~\ref{fig11}.
Note that the stars below are all in $\mcQ_0$, while the other adjacencies are in $\mcQ^*$.
\begin{enumerate}
\parskip=0pt
\item
	For a Type-$3$ $t$-star $S$ as illustrated in Figure~\ref{fig11}, the center $c_3$ is connected to at least one vertex $v_3$ outside $S$ in $\mcQ^*$,
	which by Lemma~\ref{lemma09}(3) is either the center of another star or a satellite of a $(t+1)$-star in $\mcQ_0$.
	\begin{enumerate}
	\parskip=0pt
	\item
		If $v_3$ is the center of an $i$-star $S'$ with $i \ne t, t+1$, then $S$ maps to $S \cup S'$.
	\item
		If $v_3$ is the center of a Type-$1$ $t$-star $T$, then $S$ and $T$ map to $S \cup T$.
	\item
		If $v_3$ is in a $(t+1)$-star $S'$, then $S$ maps to $S \cup \frac 12 S'$.
	\item
		If $v_3$ is the center of a Type-$2$ $t$-star $T$ ($S$ and $T$ is referred to as a {\em $t$-star pair}),
		then in $\mcQ^*$ each satellite of $T$ is not connected to $v_3$ anymore, 
		but by Lemma~\ref{lemma09}(2) it is connected to the center of a $(t-1)$-star $S'$ or a $k$-star $T'$ in $\mcQ^*$.
		The $t$-star pair $S$ and $T$ maps to all such $(t-1)$-stars $S'$, if any, plus the partial $k$-star $\frac {x(T')}k T'$ of each such $k$-star $T'$, if any,
		i.e., $S \cup T \cup (\cup_{S'} S') \cup (\cup_{T'} \frac {x(T')}k T')$, where $x(T')$ is the number of satellites of $T$ connected to the center of $T'$ in $\mcQ^*$.
	\end{enumerate}
\item
	For a Type-$2$ $t$-star $S$ that is not involved in the first part, as illustrated in Figure~\ref{fig11},
	by Lemma~\ref{lemma09}(2) at least one satellite $v_2$ is connected to the center of a $(t-1)$-star $S'$ or a $k$-star $S''$ in $\mcQ^*$.
	Then $S$ maps to either $S \cup S'$ if $S'$ exists or $S \cup \frac 1k S''$ otherwise.
\item
	For a Type-$1$ $t$-star $S$ that is not involved in the first part, as illustrated in Figure~\ref{fig11}, $S$ maps to itself $S$.
\end{enumerate}

For each $t$-star $S$ or a $t$-star pair $(S, T)$, we denote the set of vertices it maps to by $U(S)$ or $U(S, T)$, respectively.
When $S$ or $(S, T)$ is clear from the context, we simplify it as $U$.
We will show that no other $t$-star maps to any vertex of $U$, i.e., the mapping is injective.
Next, for any subset $U \subseteq V(\mcQ_0)$, we define the ratio $r(U) = \frac {Opt(U)}{Apx(U)}$,
where $Opt(U)$ and $Apx(U)$ denote the number of vertices of $U$ that are covered by $\mcQ^*$ and $\mcP$, respectively.
One sees that if a $t$-star $S$ or a $t$-star pair $(S, T)$ maps to itself, then $r(U) \le 1$;
and for the set $U$ of vertices of $V(\mcQ_0)$ that are not mapped by any $t$-star or $t$-star pair, $r(U) \le 1$ too.
The next four lemmas consider the interesting cases (1a, 1c, 1d and 2) in the mapping process.

\begin{lemma}
\label{lemma10}
If a Type-$3$ $t$-star $S$ maps to $U = S \cup S'$ as in case 1a, then $r(U) \le 1 + \frac 1{t+2}$.
\end{lemma}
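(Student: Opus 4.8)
The plan is to prove the ratio bound by a direct vertex count on the region $U = S \cup S'$, comparing the number of vertices covered by the optimal packing $\mcQ^*$ against the number retained in the algorithm's output $\mcP$. First I would record the two cardinalities: the Type-$3$ $t$-star $S$ has $t+1$ vertices, and the $i$-star $S'$ (with $i \ne t, t+1$ by the hypothesis of case 1a) has $i+1$ vertices, so $|U| = t + i + 2$.

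The next step is to pin down $Apx(U)$ exactly. Recall that $\mcP$ is obtained from $\mcQ_0$ by deleting a single satellite from each $t$-star. Thus $S$ keeps $t$ of its $t+1$ vertices. The crucial observation is that $S'$ is \emph{not} a $t$-star, since $i \ne t$; hence no vertex of $S'$ is deleted and all $i+1$ of its vertices survive in $\mcP$. This yields $Apx(U) = t + (i+1) = t + i + 1$.

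For $Opt(U)$ I would simply use the trivial upper bound $Opt(U) \le |U| = t + i + 2$. In fact this is nearly the exact value: all $t+1$ vertices of $S$ are covered by $\mcQ^*$ because $S$ is Type-$3$, and the center $v_3$ of $S'$ is covered since it is adjacent to the center $c_3$ of $S$ in $\mcQ^*$; only the $i$ satellites of $S'$ are a priori uncertain, but their covered status is irrelevant for an upper bound. Combining the two counts gives $r(U) \le \frac{t+i+2}{t+i+1} = 1 + \frac{1}{t+i+1}$.

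To finish, I would note that $S'$ is a genuine star of the sequential packing $\mcQ_0$, so $i \ge 1$, whence $t + i + 1 \ge t + 2$ and therefore $r(U) \le 1 + \frac{1}{t+2}$, as claimed. The only point needing care is the step identifying $S'$ as a non-$t$-star, so that all of its vertices are retained in $\mcP$ — this is exactly the case hypothesis $i \ne t$ — together with the remark that the ratio $1 + \frac{1}{t+i+1}$ is largest, and hence the bound tight, at the smallest admissible value $i = 1$ (which is permissible since $t \ge 2$ forces $1 \ne t, t+1$).
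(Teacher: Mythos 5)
Your ratio computation is correct and is arithmetically identical to the paper's: both bound $r(U)$ by $\frac{(t+1)+(i+1)}{t+(i+1)}=\frac{t+i+2}{t+i+1}$, observe this is decreasing in $i$, and take the worst admissible case $i=1$ (your justification that $S'$ is not a $t$-star, so none of its vertices are deleted in forming $\mcP$, is exactly the right point). The one substantive difference is that the paper's proof spends most of its length on something you omit entirely: showing that no \emph{other} Type-$3$ $t$-star maps to $S'$ --- for $i>1$ because $v_3$ is already claimed as a satellite of the $(t+1)^+$-star centered at $c_3$ in $\mcQ^*$, and for $i=1$ because a second $t$-star mapping through the other vertex of $S'$ would make {\sc Revise-$(t,t,1)$} applicable. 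That disjointness claim is not needed for the literal inequality $r(U)\le 1+\frac1{t+2}$, so your proof of the stated lemma is complete as written; but it is load-bearing for the overall analysis (the regions $U$ must be vertex-disjoint for the per-region ratios to combine into Theorem~\ref{thm03}, and Lemma~\ref{lemma12} explicitly cites Lemma~\ref{lemma10} for the fact that no Type-$3$ $t$-star maps to a $(t-1)$-star $S'$). If you were reconstructing the whole argument rather than just this lemma, you would need to supply that injectivity step somewhere.
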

\begin{proof}
Again see for an illustration in Figure~\ref{fig11}, where $S$ is centered at $c_3$ and $c_3$ is connected to an outside vertex $v_3$ in $\mcQ^*$.
In case 1a, $v_3$ is the center of $S'$ which is an $i$-star with $i \ne t, t+1$.
Since $v_3$ is a satellite of the $(t+1)^+$-star centered at $c_3$ in $\mcQ^*$,
if $i > 1$ then no other Type-$3$ $t$-star maps to $S'$.
If $i = 1$ and there is another Type-$3$ $t$-star mapping to $S'$, then it maps through the vertex of $S'$ other than $v_3$, i.e., the satellite;
however, in such a case {\sc Revise-$(t, t, 1)$} is applicable, a contradiction.
The ratio $r(U) \le \max_{i \ne t, t+1} \frac {(t+1) + (i+1)}{t + (i+1)} = 1 + \frac 1{t+2}$, in which the maximum is achieved at $i = 1$.
\end{proof}

\begin{lemma}
\label{lemma11}
If a Type-$3$ $t$-star $S$ maps to $U = S \cup \frac 12 S'$ as in case 1c, then $r(U) \le 1 + \frac 1{1.5t + 1}$.
\end{lemma}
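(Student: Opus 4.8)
The plan is to reduce the statement to a direct two-line arithmetic once two things are pinned down: the exact vertex counts in $U = S \cup \frac 12 S'$, and the legitimacy of assigning only \emph{half} of $S'$ to $S$ (so the global mapping stays injective). Recall that $S$ is a $t$-star, so $|V(S)| = t+1$, and $S'$ is a $(t+1)$-star, so $|V(S')| = t+2$; hence $|U| = (t+1) + \frac 12 (t+2) = \frac{3t+4}{2}$. The computation of the ratio will follow from a tight lower bound on $Apx(U)$ and the trivial upper bound $Opt(U) \le |U|$.

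First I would justify the factor $\frac 12$ by an injectivity argument paralleling the one in Lemma~\ref{lemma10}. In case 1c the center $c_3$ of $S$ is joined in $\mcQ^*$ (hence in $G$) to a vertex $v_3$ of the $(t+1)$-star $S'$, which by Lemma~\ref{lemma09}(3) is either a satellite or the center of $S'$. I would argue that at most two Type-$3$ $t$-stars can map to $S'$ in total. If two distinct Type-$3$ $t$-stars had their centers joined to two distinct \emph{satellites} $a,b$ of $S'$, then moving $a$ and $b$ onto those two $t$-stars turns each into a $(t+1)$-star and leaves $S'\setminus\{a,b\}$ a $(t-1)$-star; this is precisely an applicable {\sc Revise-$(t,t,t+1)$} eliminating two $t$-stars, contradicting termination. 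So at most one Type-$3$ $t$-star reaches $S'$ through a satellite, and since $S'$ has a single center at most one more reaches it through that center; assigning each of these at most two $t$-stars a disjoint half $\frac 12 S'$ therefore keeps the mapping injective.

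Next I would bound the two counts. For $Apx(U)$, the solution $\mcP$ discards exactly one satellite of the $t$-star $S$, keeping $t$ of its $t+1$ vertices, whereas it retains all of the non-$t$-star $S'$, so the assigned half contributes $\frac{t+2}{2}$ vertices; thus $Apx(U) = t + \frac{t+2}{2} = \frac{3t+2}{2}$. For $Opt(U)$ I would use only $Opt(U) \le |U| = \frac{3t+4}{2}$. Combining,
\[
r(U) = \frac{Opt(U)}{Apx(U)} \le \frac{(3t+4)/2}{(3t+2)/2} = \frac{3t+4}{3t+2} = 1 + \frac{2}{3t+2} = 1 + \frac{1}{1.5t+1}.
\]

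The main obstacle is the injectivity step, not the arithmetic: I must be careful that the two routes into $S'$ (via a satellite versus via the center) cannot \emph{both} be duplicated, and in particular that a satellite-reaching and a center-reaching $t$-star may coexist without triggering {\sc Revise-$(t,t,t+1)$} — this is exactly what forces the factor to be $\frac 12$ rather than $1$. I would also verify the small boundary $t = 2$, where $S'\setminus\{a,b\}$ degenerates to a $1$-star, to confirm the revision argument still produces three valid $k^-$-stars with no $t$-star.
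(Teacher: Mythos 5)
Your proposal is correct and follows essentially the same route as the paper: both bound the number of Type-$3$ $t$-stars mapping to the $(t+1)$-star $S'$ by two via the applicability of {\sc Revise-$(t,t,t+1)$} when two distinct satellites of $S'$ are reached (which justifies the disjoint halves $\frac 12 S'$), and then compute $r(U) \le \frac{(t+1)+\frac 12(t+2)}{t+\frac 12(t+2)} = 1+\frac 1{1.5t+1}$ using that $\mcP$ keeps $t$ vertices of $S$ and all of $S'$. Your extra care about the $t=2$ boundary and the center-versus-satellite routes into $S'$ is consistent with, and slightly more explicit than, the paper's argument.
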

\begin{proof}
Again see for an illustration in Figure~\ref{fig11}, where $S$ is centered at $c_3$ and $c_3$ is connected to an outside vertex $v_3$ in $\mcQ^*$.
In case 1c, $v_3$ is in a $(t+1)$-star $S'$.
Again since $v_3$ is a satellite of the $(t+1)^+$-star centered at $c_3$ in $\mcQ^*$, if there are three Type-$3$ $t$-stars mapping to $S'$,
then two of them map through two different satellites (besides one mapping through the center) of $S'$;
however, in such a case {\sc Revise-$(t, t, t+1)$} is applicable, a contradiction.
This shows that the partial $(t+1)$-star $\frac 12 S'$ mapped by $S$ is vertex-disjoint with the other partial $(t+1)$-star $\frac 12 S'$.
The ratio $r(U) \le \frac {(t+1) + \frac 12 (t+2)}{t + \frac 12 (t+2)} = 1 + \frac 1{1.5t + 1}$.
\end{proof}

\begin{lemma}
\label{lemma12}
If a Type-$3$ $t$-star $S$ and a Type-$2$ $t$-star $T$ map together to $U = S \cup T \cup (\cup_{S'} S') \cup (\cup_{T'} \frac {x(T')}k T')$ as in case 1d,
then $r(U) \le 1 + \frac 1{t+1 + 1/k}$.
\end{lemma}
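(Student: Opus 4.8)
The plan is to bound $r(U)=Opt(U)/Apx(U)$ by a direct vertex count, comparing how $\mcQ^*$ and $\mcP$ cover the vertices collected in $U$. First I would fix notation for the $t$ satellites of the Type-$2$ $t$-star $T$: by Lemma~\ref{lemma09}(2) each of them is connected in $\mcQ^*$ to the center of either a $(t-1)$-star $S'$ or a $k$-star $T'$ of $\mcQ_0$; say $a$ of them go to $(t-1)$-star centers and $b=t-a$ to $k$-star centers. Since $S$, $T$, every $S'$ and every $T'$ lie in $\mcQ_0$, and $\mcP$ is obtained from $\mcQ_0$ only by deleting one satellite from each $t$-star, in $\mcP$ the two $t$-stars $S$ and $T$ each retain $t$ vertices, while every $(t-1)$-star and every $k$-star (none of which is a $t$-star, as $k>t$) survives intact. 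Hence the only vertices of $U$ uncovered by $\mcP$ are the two deleted satellites, so $Opt(U)\le |U| = Apx(U)+2$ and therefore $r(U)\le 1+\frac{2}{Apx(U)}$; it then remains only to lower-bound $Apx(U)$.

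The crucial step is to show that the $a$ satellites routed to $(t-1)$-stars reach $a$ \emph{distinct} $(t-1)$-stars. Indeed, if two satellites of $T$ were both adjacent in $G$ (hence possibly in $\mcQ^*$) to the center of one $(t-1)$-star $S'$, then {\sc Revise-$(t, t-1)$} would apply to $T$ and $S'$, contradicting termination. Consequently $U$ contains $a$ whole $(t-1)$-stars, contributing $at$ vertices, together with the partial $k$-stars $\frac{x(T')}{k}T'$, whose vertex counts sum to $\frac{k+1}{k}\sum_{T'}x(T') = b\,(1+\frac1k)$. Adding the $t$ retained vertices from each of $S$ and $T$, I would write $Apx(U) = 2t + at + b(1+\frac1k)$. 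Substituting $a=t-b$ gives $Apx(U)=2t+t^2-b\,(t-1-\frac1k)$, which is decreasing in $b$ because $t-1-\frac1k>0$ for $t\ge 2$ and $k\ge t+1$; thus $Apx(U)$ is minimized at $a=0$, $b=t$, yielding $Apx(U)\ge t(3+\frac1k)$.

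Finally I would combine the two bounds: from $r(U)\le 1+\frac{2}{Apx(U)}$ and $Apx(U)\ge t(3+\frac1k)$ I get $r(U)\le 1+\frac{2}{t(3+1/k)}$, and this is at most $1+\frac{1}{t+1+1/k}$ exactly when $2(t+1+\frac1k)\le t(3+\frac1k)$, i.e. when $(t-2)(1+\frac1k)\ge 0$, which holds for every $t\ge 2$. The main obstacle is the non-sharing claim in the second paragraph: without it several satellites of $T$ could pile onto a single $(t-1)$-star, the $at$ term would collapse, and the ratio would exceed the target, so the whole argument hinges on invoking the right {\sc Revise} operation to force distinct $(t-1)$-stars. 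The remaining arithmetic is routine, and one sees the bound is tight at $t=2$, $a=0$.
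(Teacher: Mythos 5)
Your overall strategy is the same as the paper's: observe that $\mcP$ loses exactly the two deleted satellites of $S$ and $T$ from $U$, so $r(U)\le 1+\frac{2}{Apx(U)}$, and then lower-bound $Apx(U)$ using the fact that the satellites of $T$ reach distinct $(t-1)$-stars. Your closed-form minimization over $b$ is in fact cleaner than the paper's two-case estimate, and the arithmetic is correct. However, there is a genuine gap exactly at the step you yourself identify as crucial. You justify the distinctness of the $(t-1)$-stars by invoking \textsc{Revise-$(t,t-1)$} on $T$ and $S'$, but for $t=2$ this operation is not applicable: if both satellites of the $2$-star $T$ are adjacent to the center $u$ of a $1$-star $S'$, moving them to $u$ leaves the center of $T$ as an isolated vertex (a $0$-star is not a star, and uncovering a vertex would contradict the optimality of $\mcQ_0$), and no repartition of these $5$ vertices into $k^-$-stars avoids a $2$-star or produces three stars. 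The paper handles $t=2$ by instead applying \textsc{Revise-$(t,t,1)$} to $S$, $T$ and $S'$, which works precisely because in case~1d the center of $S$ is adjacent to the center of $T$, so one can form a $3$-star at $u$ (absorbing both satellites of $T$ and the satellite of $S'$) and a $3$-star at the center of $S$ (absorbing the center of $T$). This is not cosmetic: for $t=2$ with both satellites piling onto one $1$-star your count gives $Apx(U)=6$ and $r(U)\le \frac43 > 1+\frac{1}{3+1/k}$, so without the three-star operation the lemma's bound actually fails.

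A secondary omission: the paper's proof of this lemma also verifies that no other $t$-star or $t$-star pair maps to the $(t-1)$-stars $S'$ or to overlapping portions of the $k$-stars $T'$ (via \textsc{Revise-$(t,t,t-1)$}, \textsc{Revise-$(t,t,t+1)$}, and a degree count at the center of $T'$). This disjointness is not needed for the literal inequality $r(U)\le 1+\frac{1}{t+1+1/k}$, but it is needed for the regions to be summed in Theorem~\ref{thm03}, and the paper establishes it here; your write-up should at least acknowledge where that is done.
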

\begin{proof}
Firstly, if the $(t-1)$-star $S'$ exists, then no other $t$-star pair or a Type-$2$ $t$-star maps to it (in case 1d or in case 2, respectively)
since otherwise {\sc Revise-$(t, t, t-1)$} would be applicable, a contradiction.
No Type-$3$ $t$-star maps to $S'$ either by Lemma~\ref{lemma10}.
Next, consider a $k$-star $T'$ of which the center is connected to $x(T')$ satellites of $T$ in $\mcQ^*$.
When $k > t+1$ (i.e., excluding case 1c), since no $t$-star or $t$-star pair maps to a satellite of $T'$ and the center of $T'$ is connected to at most $k$ other vertices in $\mcQ^*$,
the partial $k$-star $x(T') T'$ mapped by the pair $(S, T)$ is vertex-disjoint with any other partial $k$-stars of $T'$ mapped by others.
When $k = t+1$, no Type-3 $t$-star maps to $T'$ through a satellite of $T'$ either since otherwise {\sc Revise-$(t, t, t+1)$} would be applicable.
This shows that every vertex of $U$ is mapped only to the $t$-star pair $(S, T)$.

Recall from Lemma~\ref{lemma09}(2) that each satellite of $T$ is connected to the center of a $(t-1)$-star or a $k$-star in $\mcQ^*$,
but no two satellites of $T$ can be connected to the center of the same $(t-1)$-star in $\mcQ^*$ since otherwise {\sc Revise-$(t, t-1)$} is applicable when $t \ge 3$
or {\sc Revise-$(t, t, 1)$} is applicable when $t = 2$.
Suppose there are $p$ satellites connecting to the centers of $p$ $(t-1)$-stars, respectively, and $t - p$ satellites connecting to the centers of $k$-stars in $\mcQ^*$.
If $p = 0$, then $q = t \ge 2$ and hence the ratio $r(U) \le \frac {2(t+1) + \frac tk (k+1)}{2t + \frac tk (k+1)} \le 1 + \frac 1{1.5t + 1/k} \le 1 + \frac 1{t+1 + 1/k}$;
if $p \ge 1$, 
then $r(U) \le \frac {2(t+1) + p t + \frac {(t-p)(k+1)}k}{2t + p t + \frac {(t-p)(k+1)}k} \le 1 + \frac 1{1.5t + 1} \le 1 + \frac 1{t+1 + 1/k}$ due to $t - p \ge 0$ and $k > t \ge 3$,
or $r(U) \le 1 + \frac 1{1.5t + 1/2} \le 1 + \frac 1{t+1 + 1/k}$ when $k > t = 2$.
\end{proof}

\begin{lemma}
\label{lemma13}
If a Type-$2$ $t$-star $S$ maps to $U = S \cup S'$ or $U = S \cup \frac 1k S''$ as in case 2, then $r(U) \le 1 + \frac 1{t+1 + 1/k}$.
\end{lemma}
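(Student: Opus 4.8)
The plan is to split the argument according to the two possible forms of $U$, and in each form to first check that the part of the mapping landing in $U$ is injective (so that the contribution of $U$ to the global estimate is governed by $r(U)$ alone) and then to read off $r(U)$ directly. Throughout I rely on the termination conditions of {\sc LocalSearch-$k^-/t$} together with Lemma~\ref{lemma09}, reusing the style of the injectivity arguments already carried out in the proofs of Lemmas~\ref{lemma10} and~\ref{lemma12}. Since $S$ is a $t$-star it contributes exactly $t$ vertices to $\mcP$ (one satellite is deleted) and, being Type-$2$, all $t+1$ of its vertices to $\mcQ^*$; the work is therefore in controlling the star $S$ reaches and in bounding its $\mcQ^*$-coverage.

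Consider first $U = S \cup S'$ with $S'$ a $(t-1)$-star whose center $c'$ is joined in $\mcQ^*$ to a satellite $v_2$ of $S$. First I would show that no other $t$-star or $t$-star pair reaches $S'$. If a second Type-$2$ $t$-star (through case~2) or the Type-$2$ member of a $t$-star pair (through case~1d) also sent a satellite to $c'$, then two $t$-stars would both be adjacent to $c'$ and {\sc Revise-$(t, t, t-1)$} would apply, a contradiction. A Type-$3$ $t$-star cannot reach $S'$ through case~1a either: such a mapping would make $c'$ a satellite of that Type-$3$ star in $\mcQ^*$, whose unique $\mcQ^*$-neighbor is then the Type-$3$ center, contradicting that $c'$ is adjacent to $v_2$ in $\mcQ^*$. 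Hence $S'$ is kept intact in $\mcP$, contributing its $t$ vertices there and at most $t$ vertices to $\mcQ^*$, so $r(U) \le \frac{(t+1)+t}{t+t} = 1 + \frac 1{2t}$, which is at most $1 + \frac 1{t+1+1/k}$ because $t \ge 2 > 1 + \frac 1k$.

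For $U = S \cup \frac 1k S''$ with $S''$ a $k$-star whose center $c''$ is joined in $\mcQ^*$ to $v_2$, the disjointness becomes fractional, and I must verify that the total fraction of $S''$ claimed across all regions is at most $1$. By Lemma~\ref{lemma09}(2) every case-2 or case-1d region touching $S''$ does so through an adjacency of a $t$-star satellite to $c''$, and each takes the $\frac xk$-fraction corresponding to its $x$ such adjacencies; since $c''$ has at most $k$ neighbors in $\mcQ^*$, these fractions sum to at most $1$. It then remains to exclude Type-$3$ stars from also claiming part of $S''$: when $k \ge t+2$ the only route is case~1a through $c''$, excluded by the same center/satellite argument as above. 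The boundary $k = t+1$ is where I expect the real difficulty, since a Type-$3$ star could in principle reach the $(t+1)$-star $S''$ through one of its satellites via case~1c; I would resolve this by showing that if such a Type-$3$ $t$-star and $S$ both touched $S''$, then {\sc Revise-$(t, t, t+1)$} fires—move that satellite of $S''$ onto the Type-$3$ center and move $v_2$ onto $c''$, turning both $t$-stars into non-$t$-stars—contradicting termination. With the fraction bound in hand, $\frac 1k S''$ contributes its $\frac{k+1}k$ vertices to both $\mcP$ and $\mcQ^*$, giving $r(U) \le \frac{(t+1)+\frac{k+1}k}{t+\frac{k+1}k} = 1 + \frac 1{t+1+1/k}$, exactly the claimed bound.
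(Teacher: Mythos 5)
Your proposal is correct and follows essentially the same route as the paper: the paper's own proof simply says "similar to Lemma~\ref{lemma12}" and computes $r(U) \le \max\{\frac{(t+1)+t}{2t},\, \frac{(t+1)+(k+1)/k}{t+(k+1)/k}\} = 1 + \frac{1}{t+1+1/k}$, deferring the injectivity/disjointness arguments (via {\sc Revise-$(t,t,t-1)$}, {\sc Revise-$(t,t,t+1)$}, the fractional count at the $k$-star's center, and the satellite-degree argument excluding Type-$3$ stars) to the proof of Lemma~\ref{lemma12}. You have merely spelled out those deferred details explicitly, and your ratio computations and the comparison $2t > t+1+1/k$ for $t \ge 2$ match the paper's.
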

\begin{proof}
The proof is similar to Lemma~\ref{lemma12}, but beginning with the Type-$2$ $t$-star $T$.
The ratio $r(U) \le \max\{\frac {(t+1) + t}{t + t}, \frac {(t+1) + \frac 1k (k+1)}{t + \frac 1k (k+1)}\} = 1 + \frac 1{t+1 + 1/k}$.
\end{proof}

\begin{theorem}
\label{thm03}
The algorithm {\sc LocalSearch-$k^-/t$} is an $O(|V|^4 |E|)$-time $(1 + \frac 1{t+1 + 1/k})$-approximation for the $k^-/t$-star packing problem for any $k > t \ge 2$.
\end{theorem}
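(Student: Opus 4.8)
The plan is to prove the running time and the approximation ratio separately; the ratio analysis is essentially the assembly of Lemmas~\ref{lemma10}--\ref{lemma13} together with the two easy configurations left out of them, while the time bound rests on a potential argument.

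\emph{Running time.} Computing the initial optimal $k^-$-star packing $\mcQ_0$ costs $O(\sqrt{|V|}\,|E|)$, which is dominated by the main loop, so the task is to bound the number of iterations and the per-iteration cost. Since by (the proof of) Lemma~\ref{lemma09} no operation touches $V_0$, the set $V(\mcQ_0)$ is invariant, and I would track the potential $\Phi = 2 n_t - n_s$, where $n_t$ and $n_s$ are the numbers of $t$-stars and of all stars in the current $\mcQ_0$. Checking each {\sc Revise} operation shows that $\Phi$ drops by at least one: an operation that decreases $n_t$ raises $n_s$ by at most one, so the weight on $n_t$ dominates, and an operation preserving $n_t$ raises $n_s$ by at least one. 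As $n_t, n_s \in [0, |V|]$, the range of $\Phi$ is $O(|V|)$, giving $O(|V|)$ iterations. Each {\sc Revise} operation involves at most three stars, hence $O(|V|^3)$ combinations to examine, each testable in $O(|E|)$ time by inspecting the induced subgraph, with the chosen operation applied in $O(1)$; this yields $O(|V|^3|E|)$ per iteration and $O(|V|^4|E|)$ overall.

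\emph{Approximation ratio.} Put $\alpha = 1 + \frac{1}{t+1+1/k}$. By Lemma~\ref{lemma08} I assume $V(\mcQ^*) \subseteq V(\mcQ_0)$, so that $Opt = |V(\mcQ^*)|$ and $Apx = |V(\mcP)|$ both split as sums of $Opt(U)$ and $Apx(U)$ over the regions $U$ of the mapping. The heart of the argument is to confirm that the mapping is an injective fractional partition of $V(\mcQ_0)$ and that $r(U) \le \alpha$ for every region. Self-mapped Type-$1$ $t$-stars and the unmapped remainder give $r(U) \le 1 \le \alpha$; cases 1d and 2 give $r(U) \le \alpha$ directly by Lemmas~\ref{lemma12} and~\ref{lemma13}; cases 1a and 1c give $r(U) \le 1 + \frac{1}{t+2}$ and $r(U) \le 1 + \frac{1}{1.5t+1}$ by Lemmas~\ref{lemma10} and~\ref{lemma11}, and both are at most $\alpha$ because $t+2 \ge t+1+1/k$ and $1.5t+1 \ge t+1+1/k$ for $t \ge 2$. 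The only configuration not covered by a lemma is case 1b, which I would settle directly: a Type-$3$ $t$-star $S$ (all $t+1$ vertices covered by $\mcQ^*$) together with a Type-$1$ $t$-star $T$ (at most $t$ covered) maps to $S \cup T$; the center of $T$ being a satellite of $S$'s star in $\mcQ^*$ makes the charge to $S$ unique, and counting gives $r(U) \le \frac{2t+1}{2t} \le \alpha$ since $t \ge 1 + 1/k$. Summing $Opt(U) \le \alpha\,Apx(U)$ over all regions then yields $Opt \le \alpha\,Apx$.

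The main obstacle is not any individual ratio computation but the global bookkeeping of the mapping: I must certify that no vertex, and no fractional share of a $(t+1)$-star or $k$-star split among several regions, is charged twice, by reconciling the per-case disjointness arguments buried inside Lemmas~\ref{lemma10}--\ref{lemma13} with the termination conditions of Lemma~\ref{lemma09}, and that the leftover vertices exactly absorb the unassigned remainders at ratio at most one, so that the partition is genuinely exhaustive and the summation is valid.
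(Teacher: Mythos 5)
Your proposal is correct and follows essentially the same route as the paper: the running-time bound via $O(|V|)$ iterations of $O(|V|^3|E|)$-time {\sc Revise} searches, and the ratio by summing $Opt(U)\le\alpha\,Apx(U)$ over the regions of the mapping using Lemmas~\ref{lemma10}--\ref{lemma13}. You actually supply two details the paper leaves implicit --- an explicit potential ($2n_t-n_s$) certifying the $O(|V|)$ iteration count, and the direct bound $r(U)\le\frac{2t+1}{2t}\le\alpha$ for case 1b, which is sharper than the paper's blanket assertion that pairs mapping to themselves have $r(U)\le 1$ --- but these are refinements, not a different argument.
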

\begin{proof}
Note that our algorithm calls the $O(\sqrt{|V|} |E|)$-time algorithm~\cite{BG11} to compute an optimal sequential $k^-$-star packing first,
followed by iteratively applying a {\sc Revise} operation which scans for a combination of at most three stars to reduce the number of $t$-stars,
or to keep the same number of $t$-stars but increasing the number of other stars.
Such an iteration can be done in $O(|V|^3 |E|)$ time.
Lastly, one satellite from each remaining $t$-star is removed to form a feasible $k^-/t$-star packing, which takes only $O(|V|)$ time.
The overall running time of the algorithm is thus $O(|V|^4 |E|)$.

The performance ratio is done in Lemmas~\ref{lemma10}--\ref{lemma13}.
\end{proof}

We remark that for the $k^-/t$-star packing problem where $k = \infty$ and $t \ge 2$, i.e., all but $t$-stars are candidates,
the same argument above by removing the star size constraint proves that the algorithm {\sc LocalSearch-$k^-/t$} is a $(1 + \frac 1{t+2})$-approximation for $t \ge 2$.

\section{Conclusion}
We studied two non-sequential star packing problems using large stars or small stars only, specially, the $k^+$-star packing for $k \ge 2$ and the $k^-/t$-star packing for $k > t \ge 2$,
from the perspective of approximation algorithms.
We designed a local search algorithm for each of them, which is $(1 + \frac {k^2}{2k + 1})$-approximation and $(1 + \frac 1{t + 1 + 1/k})$-approximation, respectively;
and furthermore by adding another local operation, the algorithm for $k^+$-star packing problem becomes a $\frac 32$-approximation for the $2^+$-star packing problem.
They improve the state-of-the-art $(1 + \frac k2)$-, $(1 + \frac 1t)$-, and $\frac 95$-approximation, respectively.
As pointed out by Li and Lin~\cite{LL21}, it would be interesting to design approximation algorithms for the $\mcH$-packing problem where $\mcH$ contains two stars only.
We also note that there seems no existing approximation algorithm for the $k^-/1$-star packing, i.e., $k > t = 1$,
unless when $k = 2$ the problem becomes the $P_3$-packing problem.


\end{document}